\crefname{hypothesis}{Hypothesis}{Hypotheses}
\title{Optimization-based quasi-uniform complex spherical $(k,l)$-design and generalized multitaper for time series analysis\thanks{\today.
\funding{The first author acknowledges support
from the Australian Research Council under Discovery Project DP180100506.
The last author acknowledges support by Taiwan MOST Grant 107-2115-M-008-010-MY2.
This work is supported by the National Science Foundation under Grant No.~DMS-1439786 while the first and last authors were in residence at the Institute for Computational and Experimental Research in Mathematics in Providence, RI, during the Point configurations in Geometry, Physics and Computer Science Program.
}}}
\author{Matt Sourisseau\thanks{Department of Mathematics, University of Toronto, Toronto, ON, Canada
  (\email{a}).}
\and Yu Guang Wang\thanks{School of Mathematics and Statistics, University of New South Wales, Sydney, NSW, Australia; ICERM, Brown University, Providence, RI, USA 
  (\email{yuguang.wang@unsw.edu.au}).}
\and Hau-Tieng Wu\thanks{Department of Mathematics, Department of Statistical Science, Duke University, Durham, NC, USA; Mathematics Division, National Center for Theoretical Sciences, Taipei, Taiwan 
  (\email{a}).}
\and Wei-Hsuan Yu\thanks{Department of Mathematics, National Central University, Taoyuan, Taiwan; ICERM, Brown University, Providence, RI, USA 
  (\email{hauwu@math.duke.edu}).}}
\newtheorem{example}[theorem]{Example}
\newcommand{\dd}[1]{\mathrm{d}#1}
\newcommand{\rsd}{
real spherical $t$-design}
\newcommand{\cpsd}{
complex spherical design}
\newcommand{\tcsd}{
triangular complex spherical $t$-design}
\newcommand{\scsd}{
square complex spherical $t$-design}
\newcommand{\Rd}[1][d]{ 
    \mathbb{R}^{#1}
}
\newcommand{\cd}[1][d]{ 
    \mathbb{C}^{#1}
}
\newcommand{\csd}[1][d]{ 
    \Omega^{#1}
}
\newcommand{\cu}{ 
    \PT{u}
}
\newcommand{\cv}{ 
    \PT{v}
}
\newcommand{\Nz}{ 
    \mathbb{N}_{0}
}
\newcommand{\imu}{ 
    {\mathrm{i}}
}
\newcommand{\PT}[1]{{\bf #1}} 
\newcommand{\conj}[1]{ 
 \overline{#1}
}
\newcommand{\spdk}[1][t]{   
\varphi_{#1}
}
\newcommand{\spdka}[1][t]{   
\widetilde{\varphi}_{#1}
}
\newcommand{\br}{   
\boldsymbol{r}
}
\newcommand{\bD}{   
\boldsymbol{D}
}
\newcommand{\sfgrad}[1][]{ 
	\nabla_{*}
}
\newcommand{\sfcurl}[1][]{ 
	\mathbf{L}
}
\newcommand{\Jcb}[2][\alpha,\beta]{
    P^{(#1)}_{#2}
}
\newcommand{\Legen}[2][d+1]{   
{P}_{#2}^{(#1)}
}
\newcommand{\Jw}[1][\alpha,\beta]{ 
w_{#1}
}
\newcommand{\distr} { 
  {\rm dist}_{r}  }
\newcommand{\sepr} { 
  \delta  }
\newcommand{\coverr} { 
  h }
\newcommand{\mratr} { 
  \rho  }
\newcommand{\distc} { 
  {\rm dist}_{\rm c}}
\newcommand{\sepc} { 
  \delta^{\rm c}  }
\newcommand{\coverc} { 
  h^{\rm c}  }
\newcommand{\mratc} { 
  \rho^{\rm c}  }
\newcommand{\re}{ 
  {\rm Re}
  }
\newcommand{\im}{ 
  {\rm Im}
  }
\newcommand{\sph}[1]{ 
    \mathbb{S}^{#1}
}
\newcommand{\csdm}[1][]{ 
    \mu_{#1}
}
\newcommand{\imat}[1][2]{ 
    I
}
\newcommand{\Lpc}[2][d]{ 
L_{#2}(\csd[#1])
}
\newcommand{\sobH}[1][s]{ 
\mathbb{H}^{#1}
}
\newcommand{\Lpw}[2][\Jw]{ 
\mathbb{L}_{#2}(#1)
}
\newcommand{\shY}[1][\ell,m]{ 
	Y_{#1}
}
\newcommand{\innerh}[2]{ 
\left\langle #1,#2 \right\rangle
}
\newcommand{\innerlc}[2][\Lpc{2}]{ 
\left\langle #2 \right\rangle_{#1}
}
\newcommand{\InnerLGb}[2][{\Jw[r-\frac{1}{2},r-\frac{1}{2}]}]{ 
\left(#2\right)_{\Lpw[{#1}]{2}}
}
\newcommand{\intc}[1]{    
\:\mathrm{d}\mu(\PT{#1})
}
\newcommand{\IntD}[1]{ 
\:\mathrm{d}{#1}
}
\newcommand{\Diff}[2][t]{ 
\ifthenelse{\equal{#2}{1}}{\frac{\mathrm{d}}{\mathrm{d}#1}}{
\left(\frac{\mathrm{d}}{\mathrm{d}#1}\right)^{#2}}
}
\newcommand{\expect}[1]{ 
\mathbb{E}\left[#1\right]
}
\newcommand{\wce}[1]{ 
    \mathrm{wce}(#1)
}
\newcommand{\bigo}[2] { 
    \mathcal{O}_{#1}\left(#2\right)
}
\newcommand{\C}{\mathbb{C}}
\newcommand{\R}{\mathbb{R}}
\newcommand{\bx}{{\bf x}}
\newcommand{\by}{{\bf y}}
\newcommand{\bz}{{\bf z}}
\title{Numerical computation of triangular complex spherical designs with small mesh ratio\thanks{\today.
\funding{The first and second authors acknowledge support
from the Australian Research Council under Discovery Project DP180100506.
The last author acknowledges support by Taiwan MOST Grant 107-2115-M-008-010-MY2.
This work was supported by the National Science Foundation under Grant No.~DMS-1439786 while the first, second and last authors were in residence at the Institute for Computational and Experimental Research in Mathematics in Providence, RI, during the Point Configurations in Geometry, Physics and Computer Science Program.
}}}
\author{
Yu Guang Wang\thanks{School of Mathematics and Statistics, University of New South Wales, Sydney, NSW, Australia; ICERM, Brown University, Providence, RI, USA 
  (\email{yuguang.wang@unsw.edu.au}).}
\and
Robert S. Womersley\thanks{School of Mathematics and Statistics, University of New South Wales, Sydney, NSW, Australia; ICERM, Brown University, Providence, RI, USA 
  (\email{R.Womersley@unsw.edu.au}).}
\and Hau-Tieng Wu\thanks{Department of Mathematics and Department of Statistical Science, Duke University, Durham, NC, USA; Mathematics Division, National Center for Theoretical Sciences, Taipei, Taiwan 
  (\email{hauwu@math.duke.edu}).}
\and Wei-Hsuan Yu\thanks{Department of Mathematics, National Central University, Taoyuan, Taiwan; ICERM, Brown University, Providence, RI, USA 
  (\email{u690604@gmail.com}).}}
\begin{document}

\maketitle

\begin{abstract}
    This paper provides triangular spherical designs for the complex unit sphere $\Omega^d$ by exploiting the natural correspondence between the complex unit sphere in $d$ dimensions and the real unit sphere in $2d-1$. The existence of triangular and square complex spherical $t$-designs with the optimal order number of points is established. A variational characterization of triangular complex designs is provided, with particular emphasis on numerical computation of efficient triangular complex designs with good geometric properties as measured by their mesh ratio.  We give numerical examples of triangular spherical $t$-designs on complex unit spheres of dimension $d=2$ to $6$.
\end{abstract}

\begin{keywords}
triangular complex spherical designs, optimal order, real spherical $t$-designs, complex spheres, Brouwer degree theory, separation distance, covering radius, mesh ratio, quasi-uniform, variational characterization, numerical integration
\end{keywords}

\begin{AMS}
65D30, 32A50, 32C15, 32Q10, 51M25, 62K05, 65K10
\end{AMS}

\section{Introduction}
Uniform design that seeks design points uniformly distributed over a domain is an important statistical problem with many industrial and scientific applications \cite{fang2000uniform}. For example, chemists and chemical engineers have used the uniform design technique, uniformity of space-filling, to improve cost-efficiency, robustness and flexibility of experimental works.
For the regression problem, with a small number of sampling points, a significant amount of information can be obtained for exploring the relationship between the response and the contributing factors. Applications of uniform design can be found in the textile, pharmaceutical, and fermentation industries, among others. 
See \cite{liang2001uniform,fang1980uniform} for details and a review on the topic \cite{fang2000uniform} for the underlying theory and applications. 
Different measures of uniformity are appropriate for applications from statistics, numerical integration, minimum energy or geometry.
The construction of a uniform design depends critically on the domain and dimension. In this paper, we concentrate on the complex spheres $\csd\subset\cd$ for dimension $d\geq2$.
For $d = 1$, $\csd$ is the unit circle in the complex plane
and (a rotation of) the $N$th roots of unity are ideal.


In \cite{dette2019optimal,dryden2005statistical,koay2011simple,pronzato2012design,talke2016measures}, researchers study uniformly random designs on $2$ and $3$-dimensional spheres.
Real spherical $t$-designs on $\sph{d} \subset \Rd[d+1]$ for specific combinations of $t$ and $d$ have been constructed based on an algebraic construction (i.e. using a group orbit) see for example \cite{bannaibannai2009}.
However, there is only relatively limited work on higher dimensional real spheres \cite{bannaibannai2009} or complex spheres, particularly for deterministic designs.
Motivated by analyzing complicated time series and circumventing the randomness of a recently developed generalized multi-tapered time-frequency analysis tool~\cite{DaWaWu2016}, we consider the uniform design problem on complex spheres of various dimensions.
For the complex sphere $\csd$, triangular, square or rectangular complex designs can be defined. In this paper, we concentrate on triangular complex $t$-designs on $\csd$, which have a natural correspondence with real spherical $t$-designs on $\sph{2d-1}$, $d \geq 2$.
The aim is to provide sequences of efficient triangular complex $t$-designs (with the number of points roughly the dimension of the polynomial space divided by $2d-1$) for the complex unit sphere $\csd$ with good geometric properties, as measured by the mesh ratio of the point set.

While the optimal order for the number of points $N$ in a real spherical $t$-design has been well studied~\cite{BoRaVi2013}, the existence of the optimal order designs and the leading order term for complex spheres have not yet been considered. We generalize the existence of real spherical $t$-designs \cite{BoRaVi2013} to the complex case by applying the Brouwer degree theory and Marcinkiewicz-Zygmund type inequality on manifolds.
Moreover, we offer a deterministic triangular complex design scheme that is ``uniform'' and satisfies a good numerical integration property for the complex sphere of any dimension.
The scheme is a generalization of a recently developed optimization algorithm used in \cite{Womersley2018} for real spheres $\sph{d}$, to find efficient spherical $t$-designs for $\csd$, $d\geq 2$, with low mesh ratios.


The paper is organized as follows. In Section~\ref{Section:SD}, we review existing results for real spherical designs and the worst-case error for numerical integration. In Section~\ref{Section:ComplexSphereDesign}, we introduce the different types of spherical designs for complex spheres of any dimension. We provide a theoretical statement of the existence and the optimal order for triangular and square complex spherical $t$-designs. In Section~\ref{sec:uniformity}, we study the notions of separation and covering and provide a one-to-one mapping between $\sph{2d-1}$ and $\csd$, which preserves the geometric features of the mapped point set. In Section~\ref{sec:optsd}, we detail the numerical optimization scheme for quasi-uniform real spherical $t$-designs of any dimension, which one can use to construct the triangular complex spherical $t$-designs. In Section~\ref{sec:numer}, we show the numerically constructed real spherical $t$-designs on $\sph{d}$ for $d=3,5,7,9,11$, which can be used to construct triangular complex $t$-designs on $\csd$ for $d=2,3,4,5,6$. Then, we utilize the constructed triangular complex $t$-designs to illustrate the convergence for numerical integration on $\csd[2]$. In Section~\ref{sec:conclusion}, we give a brief discussion and conclusion. In Section~\ref{sec:proof}, we give the detailed proofs of the theoretical results in the previous sections.

\section{Real spherical designs}\label{Section:SD}
The spherical design scheme is originated from discrete geometry. The spherical design problem has been widely studied for various purposes, including numerical integration, interpolation, regularization model, kernel-based approximation, fast Fourier transforms, sparse signal recovery, and computation of random fields.
There are different notions of {\em uniformly distributed points} on the sphere; for example, the energy minimizing points \cite{SaTo1997} (also called the solution for Thomson problems) and equal area points \cite{Leopardi2006, RaSaZh1994} are both uniformly distributed, but in different senses. 
In this section, we summarize the notion of spherical design and uniformity on the $d$-dimensional real unit sphere $\sph{d} \subset \Rd[d+1]$.

A desirable property of the designed points on the sphere is that they reduce the number of function evaluations in numerical integration; that is, we can sample a limited number of points to obtain the integral of a polynomial with zero-loss.

\begin{definition}[\cite{DeGoSe1977}] 
Take $t\in \mathbb{N}$.
A finite subset $X$ of the $d$-dimensional real unit sphere $\sph{d}\subset\Rd[d+1]$
is called a  {\rsd} if for any real spherical polynomial
$p$ of degree at most $t$, we have
\begin{equation}
    \frac{1}{\sigma(\sph{d})} \int_{\sph{d}}p(\bx) \mathrm{d}\sigma(\bx) =
\frac{1}{|X|}\sum_{\bx \in X} p(\bx),
\end{equation}
where $\mathrm{d}\sigma=\mathrm{d}\sigma_d$ is the canonical Riemannian measure on $\sph{d}$.
\end{definition}

Note that if $p(\bx)$ is a harmonic polynomial, then its integral is zero. Therefore, an equivalent definition is that a set $X\subset \sph{d}$ is a  {\rsd} if
\begin{equation}
  \sum_{\bx \in X} p(\bx) = 0 
\end{equation}
for all $p(\bx) \in \text{Harm}_l(\Rd[d+1])$, where $1 \leq l \leq t$ and $\text{Harm}_t(\Rd[d+1])$ is the set of homogeneous harmonic polynomials of degree $t$ on $\Rd[d+1]$ \cite{DeGoSe1977}. 
If the number of points $N = |X|$ is sufficiently large, there always exists a real spherical $t$-design on $\sph{d}$, and the optimal order is $N=\bigo{}{t^d}$. Moreover, if a spherical $N$-point $t$-design on $\sph{d}$ exists, then there may exist multiple different (not related by rotation) such $t$-designs.

For the purpose of fast numerical integration, the  {\rsd s} perform better than other kinds of designs, such as randomly uniformly distributed points, equal area points \cite{RaSaZh1994,Leopardi2006}, Fekete points \cite{SlWo2004}, Coulomb energy \cite{SaTo1997} points,\footnote{Coulomb energy points minimize the functional $\sum_{i,j=1,\ldots,N,\,i\neq j}\frac{1}{|\bx_i-\bx_j|}$, which is a classic but open problem in electromagnetic dynamics.} log energy points \cite{SaTo1997},\footnote{Log energy points minimize the functional $\sum_{i,j=1,\ldots,N,\,i\neq j}\log\frac{1}{|\bx_i-\bx_j|}$.} generalized spiral points \cite{Bauer2000,RaSaZh1994} and distance points \cite{BrSaSlWo2014}.
To quantify the performance of numerical integration using a finite set of points $X_N=\{\bx_1,\dots,\bx_N\}$, where $N\geq2$, we may consider the \emph{$p$-th worst-case error} (wce), where $p>0$, defined on the Sobolev space $\sobH:=\sobH(\sph{d})$ with the smoothness $s>d/2$ \cite{BrSaSlWo2014,BrDiSaSlWaWo2014}:
\begin{equation}
    \wce{X_N,p;\sobH} := \sup_{\genfrac{}{}{0pt}{}{f\in \sobH}{\|f\|_{\sobH}\leq1}} \left|\frac{1}{N}\sum_{i=1}^Nf(\bx_i)-\int_{\sph{d}}f(\bx)\mathrm{d}\sigma(\bx)\right|^p\,.
\end{equation}
Moreover, given $p\geq1$, for a point set $X_N$, we define the {\em strength} $s^*=s^*_p$ to be the largest $s>0$ such that
\begin{equation}
(\wce{X_N,p;\sobH})^{1/p} = \mathcal{O}(N^{-s/d})\,.
\end{equation} 

By \cite[Theorem~7]{BrSaSlWo2014}, for the set $X^{\texttt{RU}}_N$ of $N$ i.i.d. randomly uniformly distributed points on the sphere (i.e. the points of $X^{\texttt{RU}}_N$ are drawn independently with the uniform distribution on $\sph{d}$), its worst-case error for $p = 2$ satisfies
\begin{equation}\label{WCErand}
    \sqrt{\expect{\wce{X^{\texttt{RU}}_N,2;\sobH}}} = \frac{c(s,d)}{\sqrt{N}},
\end{equation}
where the constant $c(s,d):=\sqrt{\sum_{\ell=1}^{\infty}(1+\lambda_\ell)^{-s}Z(d,\ell)}$ depends only on $s$ and $d$, where $\lambda_{\ell}=\ell(\ell+d-1)$, $\ell\geq0$, are the eigenvalues of the Laplace-Beltrami operator on $\sph{d}$ and $Z(d,\ell)$ is the dimension of the space of spherical harmonics of degree $\ell$ given by \eqref{eq:zdl} below. The constant $c(s,d)$ decays to zero as $s\to\infty$. On the other hand, the worst-case error for the  {\rsd} $X^{\texttt{t-SD}}_N$ has the upper bound
\begin{equation}\label{WCEStD}
    \wce{X^{\texttt{t-SD}}_N,1;\sobH} = \bigo{}{t^{-s}},
\end{equation}
for any $s>d/2$ \cite{Brandolini_etal2014}.
If we take $N=\mathcal{O}(t^d)$ (which is always achievable \cite{BoRaVi2013} and which we use in the optimization in Section~\ref{sec:optsd}) for the {\rsd}, we would obtain the following optimal order bound:
\begin{equation}\label{eq:StDwce}
    \wce{X^{\texttt{t-SD}}_N,1;\sobH} = \bigo{}{N^{-s/d}},
\end{equation}
for any $s>d/2$, where the big $\mathcal{O}$ is independent of $N$ but may depend on $s$ and $d$. We call the number $N=\mathcal{O}(t^d)$ the {\em optimal-order number} for the {\rsd s}. By \eqref{eq:StDwce} and \eqref{WCErand}, for fixed $s$, the worst-case error of  {\rsd s} has higher order convergence rate than the randomly uniformly distributed points.
Furthermore, for the randomly uniformly distributed points, by \eqref{WCErand} and the Cauchy-Schwartz inequality, the strength is $s^*_2=d/2$ on average. On the other hand, the   {\rsd} have the strength of $s^*_1=\infty$. Thus, compared to randomly uniformly distributed points, {\rsd s} perform better for the equal weighted numerical integration of smooth functions. 
The numerical evidence by Brauchart et al. \cite{BrSaSlWo2014} shows that on $\mathbb{S}^2$, the strengths for the Fekete, equal area, Coulomb energy, Log energy, generalized spiral, and distance are close to $s^*=1.5,2,2,3,3,4$ respectively. Thus, {\rsd}s are also superior to these points for numerical integration.

\section{Complex spherical designs}\label{Section:ComplexSphereDesign}
For a complex number $z$ in $\C$, let $|z|:=\sqrt{(\re\, z)^2+(\im\, z)^{2}}$ be the modulus of $z$, where $\re\, z$ and $\im\, z$ are the real and imaginary parts of $z$. Let $\imu:=\sqrt{-1}$ be the imaginary unit. 
Let $\innerh{\cu}{\cv}:=\sum_{j=1}^{d}u_j \conj{v_j}$ be the Hermitian inner product of $\cu,\cv\in \cd$, where $u_j$ and $v_j$ are the $j$th components of $\cu$ and $\cv$, and $\conj{v_j}$ is the complex conjugate of $v_j$.
Let $\|\cu\|:=\sqrt{\innerh{\cu}{\cu}}=\sqrt{\sum_{j=1}^{d}|u_j |^{2}}$ be the $\ell_2$ norm of $\cu\in\cd$.
For $d\ge1$, let 
\[
\csd := \{\cu\in\cd : \|\cu\|=1\}
\]
be the complex (unit) sphere of $\cd[d]$.
Let $\distc(\cu,\cv):=\arccos(\re\innerh{\cu}{\cv})$ be the geodesic distance between $\cu$ and $\cv$ in $\csd$.

Spherical designs can also be defined on the complex sphere $\csd$, see \cite{RoSu2014}. For $d\ge2$, let $\csdm:=\csdm[d]$ be the invariant Haar measure (and the normalized Lebesgue measure, which is unique) on $\csd$ satisfying $\int_{\csd}\intc{u}=1$. The volume of $\csd$ is
\begin{equation}\label{eq:surf.area.csd}
	\omega_d := \frac{2\pi^{d}}{\Gamma(d)},
\end{equation} 
where $\Gamma$ is the Gamma function,
see e.g. \cite{Rudin1980}.
Let $\Lpc{2}:=L_2(\csd,\csdm)$ be the Hilbert space of complex-valued square integrable functions on $\csd$ with respect to $\csdm$, and inner product
\begin{equation*}
	\innerlc{f,g} := \int_{\csd}f(\cu)\conj{g(\cu)}\intc{u}.
\end{equation*}
For $d\ge2$, let 
	$\Delta:=\sum_{j=1}^{d}\frac{\partial^{2}}{\partial u_j\partial \conj{u_j}}$
be the Laplacian operator on $\cd$. A function $f$ on $\cd$ is called harmonic if $\Delta f=0$.
Let $k,l \in\mathbb{N}_0$.
A polynomial $p$ of $\cu$ on $\cd$ is called homogeneous of degree $k$ if $p$ is homogeneous of degree $k$ in $\cu$, that is,
\begin{equation*}
	p(a \cu) = a^k p(\cu)\quad a\in \C,\; \cu\in \cd.
\end{equation*}
A polynomial $p$ of $\cu$ on $\cd$ is called homogeneous of degree $(k,l)$ if $p$ is homogeneous of degree $k$ in $\cu$ and homogeneous of degree $l$ in $\conj{\cu}$, that is,
\begin{equation*}
	p(a \cu) = a^k\conj{a}^l p(\cu)\quad a\in \C,\; \cu\in \cd.
\end{equation*}
The restriction to $\csd$ of a homogeneous harmonic polynomial of degree $(k,l)$ on $\cd$ is called a \emph{complex spherical polynomial of degree $(k,l)$} on $\csd$. 
Let $\mathcal{H}_{k,l}(\csd)$ be the space of all complex spherical polynomials of degree $(k,l)$ on $\csd$.
The spaces $\mathcal{H}_{k,l}(\csd)$ and $\mathcal{H}_{k',l'}(\csd)$ are orthogonal for $(k,l)\neq (k',l')$. The union (or the direct sum) of $\mathcal{H}_{k,l}(\csd)$ over all $k,l\in\Nz$ is dense in $\Lpc{2}$, that is,
\begin{equation*}
	\Lpc{2} = \bigoplus_{k,l=0}^{\infty}\mathcal{H}_{k,l}(\csd).
\end{equation*}
The restriction to $\csd$ of a homogeneous harmonic polynomial of degree $k$ on $\cd$ is called a \emph{complex spherical polynomial of degree $k$} on $\csd$. 
Let $\mathbb{H}_{t}(\csd)$ be the space of all complex spherical polynomials of degree at most $t\geq0$ on $\csd$. Then,
\begin{equation*}
	\mathbb{H}_{t}(\csd)  = \bigoplus_{k+l\leq t}\mathcal{H}_{k,l}(\csd).
\end{equation*}
The dimension of the space $\mathcal{H}_{k,l}(\csd)$ is
\begin{align}\label{eq:mdt}
	Z^{(d)}_{k,l} &:= \frac{(d-1)(k+l+d-1)\Gamma(d+k-1)\Gamma(d+l-1)}{(\Gamma(d))^{2}\Gamma(k+1)\Gamma(l+1)}\\
	&\asymp (k+l+1)(k+1)^{d-2}(l+1)^{d-2},\notag
\end{align}
where $a_{\ell}\asymp b_{\ell}$ for sequences $\{a_{\ell}\}_{\ell=0}^{\infty}$ and $\{b_{\ell}\}_{\ell=0}^{\infty}$ means that there exist constants $c,c'>0$ such that $c' a_{\ell}\leq b_{\ell}\leq c a_{\ell}$ for all $\ell\geq0$.
Then, the dimension of $\mathbb{H}_{t}(\csd)$ is
\begin{equation}
	M^{(d)}_{t}:=\sum_{k+l\leq t}Z^{(d)}_{k,l}
	= 
	\frac{(2d + 2t -1) \Gamma(2d + t - 1)}{\Gamma(2d) \Gamma(t+1)} = \frac{2}{\Gamma(2d)} t^{2d-1} +
	\bigo{}{t^{2d-2}}.
\end{equation}
For more detailed discussion on complex spherical polynomials, see \cite{CoKlSi2011}.
\begin{definition}
Let $d\ge2$ and $k,l\in\Nz$.
A finite point set $X \subset \csd$ satisfying
\begin{equation}\label{eq:csdnumerint}
 \int_{\csd}g(\bz) {\rm d}\mu(\bz) =
\frac{1}{|X|}\sum_{\bz \in X} g(\bz)\quad \forall g\in \mathcal{H}_{k',l'}(\csd),\; (k',l')\in \mathcal{T}
\end{equation}
with
\begin{enumerate}
\item $\mathcal{T}=\{(k',l')\in\Nz^2: k'+l'\le t\}$
	is a triangular complex spherical $t$-design 
\item $\mathcal{T}=\{0,1,\ldots,k\}\times\{0,1,\ldots,l\}$
	is a rectangular complex spherical $(k,l)$-design
\item $\mathcal{T}=\{0,1,\ldots,t\}\times\{0,1,\ldots,t\}$
	is a square complex spherical $t$-design
\end{enumerate}
for $\csd$.
\end{definition}
The rectangular and triangular complex spherical designs are related.
\begin{remark}\label{rem:tri.rect.csd}
	Fix $d\ge1$ and $k,l\in\Nz$.
\begin{enumerate}
	\item A triangular complex spherical $(k+l)$-design is a rectangular complex spherical $(k,l)$-design and $(l,k)$-design. Hence, a triangular $2t$-design is automatically a square $t$-design.
	\item A rectangular complex spherical $(k,l)$-design is a triangular complex spherical $\min\{k,l\}$-design. Hence, a square complex $t$-design is a triangular complex $t$-design.
	\item A square complex spherical $t$-design for $\csd$ is equivalent to the notion of a complex projective $t$-design \cite{RoSu2014, hoggar1989tight}. The projective $t$-designs are not only discussed in complex space but also in quaternions $\mathbb{H}$ and octonions $\mathbb{O}$ \cite{hoggar1982t}.
\end{enumerate} 
\end{remark}
$X\in \csd$ is called \emph{symmetric} (antipodal) if $\bx \in X$ $\iff$ $-\bx \in X$.
\begin{remark}
If $k'+l'$ is odd and the point set $X$ is symmetric, then \eqref{eq:csdnumerint} is automatically satisfied as both sides are zero.
\end{remark}

In this paper, we focus on triangular and square complex spherical $t$-designs on $\csd$.
The following theorem shows that there exists a triangular complex $t$-design and a square complex $t$-design with an optimal order number of points in $\Omega^d$ in general. We generalize the proof used in \cite{BoRaVi2013} to show the optimal order.
  
\begin{theorem}\label{thm:existence_complex_design}
Let $\Omega^d$ be the unit complex sphere of $\mathbb{C}^d$, $d\geq1$. Then, there exists a constant $C_d>0$ dependent only on $d$ such that for each $t\geq1$ and any $N\geq C_d t^{2d-1}$, there exists a triangular complex spherical $t$-design, with and $N$ nodes, and the order $t^{2d-1}$ is optimal.
\end{theorem}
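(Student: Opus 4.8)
The plan is to reduce the whole statement to the known theory of real spherical designs on $\sph{2d-1}$ through the identification $\cd\cong\Rd[2d]$, $\csd\cong\sph{2d-1}$. First I would pin down the correspondence. Writing $u_j=x_j+\imu\,x_{d+j}$ identifies $\cu\in\csd$ with a point $\bx\in\sph{2d-1}$, and a direct computation gives $\Delta=\sum_{j=1}^{d}\frac{\partial^{2}}{\partial u_j\,\partial\conj{u_j}}=\tfrac14\sum_{j=1}^{2d}\frac{\partial^{2}}{\partial x_j^{2}}$, so being annihilated by the complex Laplacian is the same as being a (real) harmonic polynomial on $\Rd[2d]$; also, the unitary-invariant probability measure $\mu$ on $\csd$ is exactly the normalized surface measure on $\sph{2d-1}$. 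Hence a homogeneous polynomial of bidegree $(k,l)$ in the complex-harmonic sense is homogeneous harmonic of real degree $k+l$ on $\Rd[2d]$, which gives $\mathcal{H}_{k,l}(\csd)\subseteq\text{Harm}_{k+l}(\Rd[2d])|_{\sph{2d-1}}$ and therefore $\mathbb{H}_{t}(\csd)\subseteq\Pi_t$, the restriction to $\sph{2d-1}$ of polynomials of degree $\le t$. Since $\Delta$ shifts bidegree by $(-1,-1)$ and distinct bidegrees with a common sum land in distinct homogeneous components, each bidegree piece of a real harmonic polynomial is itself complex harmonic; combined with $\|\cu\|^{2}\equiv1$ on $\csd$ this yields the reverse inclusion, so in fact $\mathbb{H}_{t}(\csd)=\Pi_t$ as spaces of functions on $\csd=\sph{2d-1}$. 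The conclusion I want from this is: \emph{a point set $X$ is a triangular complex $t$-design on $\csd$ if and only if, regarded as a subset of $\sph{2d-1}$, it is a real spherical $t$-design.}

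With the equivalence in hand, the existence part is immediate from the Bondarenko--Radchenko--Viazovska theorem \cite{BoRaVi2013} applied on $\sph{2d-1}$: there is a constant $C_d>0$ such that for every $N\ge C_d\,t^{2d-1}$ there is a real spherical $t$-design on $\sph{2d-1}$ with exactly $N$ points, and transporting it back to $\csd$ produces a triangular complex $t$-design with $N$ nodes. The degenerate case $d=1$ is the unit circle, handled directly by equally spaced points for every $N\ge t+1$. Note that only the easy inclusion $\mathbb{H}_{t}(\csd)\subseteq\Pi_t$ is used here; the equality is needed only for optimality.

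For the lower bound I would show $|X|\ge c\,t^{2d-1}$ for some $c=c(d)>0$ and every triangular complex $t$-design $X$. Set $s=\floor{t/2}$ and consider the evaluation map $p\mapsto(p(\bz))_{\bz\in X}$ on $\mathbb{H}_{s}(\csd)$. If $p\in\mathbb{H}_{s}(\csd)$ vanishes on $X$, then on $\csd$ the function $|p|^{2}=p\,\conj{p}$ is a sum of products of complex spherical polynomials whose bidegrees sum to at most $2s$; expanding each such product in complex spherical harmonics and using $\|\cu\|^{2}\equiv1$ shows $|p|^{2}\in\mathbb{H}_{2s}(\csd)\subseteq\mathbb{H}_{t}(\csd)$. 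The design property then forces $0=\frac{1}{|X|}\sum_{\bz\in X}|p(\bz)|^{2}=\int_{\csd}|p(\bz)|^{2}\intc{z}=\|p\|_{\Lpc{2}}^{2}$, so $p\equiv0$; hence the evaluation map is injective and $|X|\ge\dim\mathbb{H}_{s}(\csd)=M^{(d)}_{s}$, which by \eqref{eq:mdt} is bounded below by a constant multiple of $t^{2d-1}$. Together with the construction of the previous paragraph, this shows the order $t^{2d-1}$ is sharp.

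I expect the only genuine work to be the bookkeeping in the first paragraph — verifying that the complex-harmonic/bidegree decomposition coincides with the Euclidean harmonic decomposition on $\sph{2d-1}$ and that the invariant measures match — together with the parallel point in the lower bound that a product $p_{a,b}\,\conj{p_{c,d}}$ with $a+b\le s$ and $c+d\le s$ restricts to $\csd$ as a combination of $\mathcal{H}_{k',l'}$ with $k'+l'\le(a+d)+(b+c)\le 2s\le t$, so the quadrature rule really does integrate $|p|^{2}$ exactly. Everything else is a direct appeal to \cite{BoRaVi2013} and the dimension count \eqref{eq:mdt}; alternatively, if a self-contained proof is preferred, the $N$-point design on $\sph{2d-1}$ can be produced by the Brouwer-degree and Marcinkiewicz--Zygmund argument underlying \cite{BoRaVi2013} rather than quoted.
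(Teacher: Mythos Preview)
Your argument is correct, but it takes a different route from the paper. You first make explicit the equivalence \emph{``$X$ is a triangular complex $t$-design on $\csd$ if and only if its image on $\sph{2d-1}$ is a real spherical $t$-design''} (via the identity $\Delta_{\mathbb C}=\tfrac14\Delta_{\mathbb R}$, the bidegree decomposition of real harmonics, and the coincidence of the invariant measures), and then invoke \cite{BoRaVi2013} as a black box; for optimality you supply a Fisher-type lower bound $|X|\ge M^{(d)}_{\lfloor t/2\rfloor}$. The paper instead works \emph{intrinsically} on $\csd$: it transports area-regular partitions from $\sph{2d-1}$ to $\csd$ via $\phi^{\texttt{R}\to\texttt{C}}$, quotes a Marcinkiewicz--Zygmund inequality on $\csd$ from \cite{FiMh2011}, and reruns the Brouwer-degree/gradient-flow construction of \cite{BoRaVi2013} directly for polynomials in $\mathcal{P}_t\subset\mathbb{H}_t(\csd)$. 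In effect the paper reproduces the \cite{BoRaVi2013} machinery in complex coordinates, while you reduce to it; your path is shorter and makes the real/complex dictionary explicit (and it also fills in the optimality half of the theorem, which the paper's Section~\ref{sec:proof} does not spell out). The paper's approach, by contrast, is self-contained and sets up a framework that could be adapted to other complex designs (e.g.\ rectangular $(k,l)$-designs) where the simple identification with a real $t$-design is not available. One cosmetic remark: your coordinate choice $u_j=x_j+\imu\,x_{d+j}$ differs from the paper's $\phi^{\texttt{R}\to\texttt{C}}$ by a permutation of real coordinates, which is harmless.
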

\begin{remark}
As a triangular $2t$-design is a square $t$-design, Theorem~\ref{thm:existence_complex_design} shows that the optimal order of the square complex $t$-design is also $C'_d t^{2d-1}$.
Note that the constant $C'_d$ for the square case satisfies $C'_d\leq 2^{2d-1}C_d$.
\end{remark}

We postpone the proof to Section~\ref{sec:uniformproof}.

\section{Notions of uniformity}\label{sec:uniformity}
In addition to the efficient numerical integration, one natural question scientists might ask is how to design points over a sphere that are ``uniform'' \cite{saff1997distributing}?
While they are good candidates, in general, the  {\rsd s} might not be uniform on the sphere. Before we search for ``uniform  {\rsd s}'', we need to define the notion of uniformity on the sphere rigorously.
When $d=1$, the nodes of the {\rsd} are the roots of unity (probably with a global rotation). To be precise, if we want to distribute $N$ points uniformly on $\mathbb{S}^1$, we can choose $\{e^{i(2\pi k/N+\theta)}\}_{k=1}^{N}\subset \mathbb{S}^1$, where $\theta\in [0,2\pi /N)$. 
When $d>1$, the answer is tricky. Specifically, the notion of ``uniform'' for a deterministic point set is no longer unique, and there are several possible definitions, including uniformity, quasi-uniformity and hyper-uniformity. Uniformity can be described by Weyl's formula: a set $X_N=\{\bx_1,\dots,\bx_N\}\subset\sph{d}$ is called \emph{uniformly distributed} if for any $\bx\in \sph{d}$ and $r\in (0,\pi/2)$,
\begin{equation}
\lim_{N\to\infty}\frac{1}{N}\sum_{i=1}^N\mathbf{1}_{C(\bx,r)}(\bx_i)=\frac{\sigma(C(\bx,r))}{\sigma(\sph{d})},
\end{equation}
where $C(\bx,r):=\{\by\in\sph{d}: \arccos(\bx\cdot\by)\leq r\}$ is the spherical cap of radius $r$ centered at $\bx\in \sph{d}$. This intuitively says that the number of points in any spherical cap with the same radius is asymptotically the same. Equivalently, the uniformity can be characterized by {\em spherical cap discrepancy} \cite{BrDi2012,BrSaSlWo2014,KuNi1974,Stolarsky1973}.
The uniformity was then generalized to hyper-uniformity, which allows characterizing the distribution of \emph{determinantal point processes}, as motivated by the statistical physics applications; we refer the readers with the interest to  \cite{BrGrPeKu2018det, BrGrKu2018prob, Torquato2016, ToSt2003} for details.

{\em Quasi-uniformity} \cite{BrDiSaSlWaWo2014,MhNaPrWa2010} is a measure of the geometric quality of a sequence of point sets based both on the separation and the mesh norm of the point set.

\begin{definition}\label{defn:meratr}
Let $X_{N}:=\{\bx_{1},\dots,\bx_N\}\subset \sph{d}$, where $d\ge2$ and $N\geq 2$. The separation distance of $X_{N}$ is defined as the minimal distance between the points of $X_{N}$,
\begin{equation}
    \sepr_{X_{N}}:=\min_{\stackrel{1\le i,j\le N}{i\neq j}}\distr(\bx_{i},\bx_{j})\,,
\end{equation}
where $\distr(\cdot,\cdot)$ is the geodesic distance on $\sph{d}$.
The covering distance (which is also called the mesh norm or fill distance) of $X_{N}$ is 
\begin{equation}
    \coverr_{X_{N}}:=\max_{\by\in \sph{d}}\min_{1\le i\le N}\distr(\by,\bx_{i}).
\end{equation}
Finally, the mesh ratio of $X_{N}$ is 
\begin{equation}\label{eq:meshratio}
    \mratr_{X_{N}}:=\mratr(X_{N}):=\frac{2\coverr_{X_{N}}}{\sepr_{X_{N}}} \geq 1.
\end{equation}
\end{definition}

The covering distance $\coverr_{X_{N}}$ is the minimal radius, for which identical spherical caps with the points of $X_{N}$ as centers cover the whole sphere, while $\sepr_{X_{N}} / 2$ is the radius for packing the sphere with non-overlapping spherical caps with centers in $X_N$.

\begin{definition}[\cite{Brauchart_etal2018random,BrSaSlWo2014,Womersley2018}]
A sequence of point sets $\{X_{N}\}$ on $\sph{d}$ is $C$-quasi-uniform if $\rho_{X_N}\leq C$ for all point sets in the sequence, where $C$ is independent of $N$ (but may depend on $d$).
\end{definition} 

Thus the closer the mesh ratio is to $1$, the more ``uniform'' the point set is. 
While there are several notions of uniformity, quasi-uniformity is computationally friendly, as it relates the largest hole in the design
to the separation of the points of the design.

By \cite[Corollary~1.7]{BrDiSaSlWaWo2014}, the covering radius of a  {\rsd} on $\sph{d}$ with the optimal-order number of nodes, $N=\bigo{}{t^d}$, has the optimal-order bound
\begin{equation}\label{eq:stdcovering}
    \coverr_{X^{\texttt{t-SD}}_N}\leq c/N^{1/d},
\end{equation}
where the constant $c$ depends only on the dimension $d$. 
By 
\cite{Brauchart_etal2018random,ReSa2016}, for $N$ randomly uniformly distributed points on $\sph{d}$, asymptotically when $N\to \infty$ we have
\begin{equation}
    \expect{\coverr_{X^{\texttt{RU}}_N}}\asymp \left(\frac{\log N}{N}\right)^{1/d},
\end{equation}
where the implied constant depends only on $d$.
%
On the other hand, the separation of the real spherical $t$-design can be arbitrarily small, since any rotation of a real spherical $t$-design is again a real spherical $t$-design, and the union of two  {\rsd}'s is a real spherical $t$-design. To have the mesh ratio of the spherical design bounded, one needs to select from different  {\rsd}'s to ensure the points are well separated. It is one of the key steps in the proposed spherical design scheme introduced in Section~\ref{sec:optsd}.

For the complex spheres, in a similar way to the real case, the \emph{separation} and \emph{covering} radii of a point set ${X}_{N}:=\{\bz_1,\dots,\bz_N\}\subset \csd$ are defined as
\begin{equation}
    \sepc_{X_{N}} := \min_{\stackrel{1\le i,j\le N}{i\neq j}}\distc(\bz_{i},\bz_{j}),\quad
    \coverc_{X_{N}} := \max_{\bz\in \cd}\min_{1\le i\le N}\distc(\bz,\bz_{i})\,,
\end{equation}
and the \emph{mesh ratio} is defined as
\begin{equation}
    \mratc_{X_N} := \frac{2\coverc_{X_{N}}}{\sepc_{X_{N}}}\geq 1.
\end{equation}
\begin{definition}
A sequence of point sets $\{X_{N}\}\subset\csd$ is $C$-quasi-uniform if there exists a constant $C$ such that $\mratc_{X_{N}}\le C$ for all point sets $X_N$ in the sequence. 
\end{definition} 


To further study the relationship between the  {\rsd} and \cpsd, we consider the following mapping $\phi^{\texttt{R}\to \texttt{C}}:\sph{2d-1}\mapsto\Omega^{d}$ for $d\geq2$. For $j=1,\dots,d$, the real and complex parts of the $j$th component of $\phi^{\texttt{R}\to \texttt{C}}(\bx)$ for $\bx\in\sph{2d-1}$ are $x_{2j-1}$ and $x_{2j}$:
\begin{equation}\label{eq:map.sph.csd}
    (\phi^{\texttt{R}\to \texttt{C}}(\bx))_{j} = x_{2j-1} +  x_{2j}\imu
\end{equation}
where $\imu^2 = -1$.
The map $\phi^{\texttt{R}\to \texttt{C}}$ is well-defined as $|\phi^{\texttt{R}\to \texttt{C}}(\bx)|^{2}=|\bx|^{2}=1$.
For a point set $X_N:=\{\bx_1,\dots,\bx_N\}\subset\sph{2d-1}$, we let $\phi^{\texttt{R}\to \texttt{C}}(X_N):=\bigl(\phi^{\texttt{R}\to \texttt{C}}(\bx_1),\dots,\phi^{\texttt{R}\to \texttt{C}}(\bx_N)\bigr)$.
Note that the set $X_N$ is symmetric if and only if $\phi^{\texttt{R}\to \texttt{C}}(X_N)$ is symmetric.

    To study the quasi-uniformity for complex spheres, we have the following proposition, which shows that $\phi^{\texttt{R}\to \texttt{C}}$ preserves the separation, covering and mesh ratio. It implies that if $X_N\subset \sph{2d-1}$ is quasi-uniform, so is $\phi^{\texttt{R}\to \texttt{C}}(X_N)\subset\csd$.

\begin{proposition}\label{Proposition: csd rsd relationship}
    Let $\phi^{\texttt{R}\to \texttt{C}}$ be the mapping given by \eqref{eq:map.sph.csd}. Then
    \begin{equation}\label{eq:sep.cover.meshratio.sph.cd}
        \sepr_{X_N} = \sepc_{\phi^{\texttt{R}\to \texttt{C}}(X_N)},\quad \coverr_{X_N} = \coverc_{\phi^{\texttt{R}\to \texttt{C}}(X_N)},\quad \mratr_{X_N} = \mratc_{\phi^{\texttt{R}\to \texttt{C}}(X_N)}.
    \end{equation}
\end{proposition}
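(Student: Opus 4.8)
The plan is to reduce everything to a single pointwise statement about the map $\phi^{\texttt{R}\to \texttt{C}}$, namely that it is a bijective \emph{geodesic isometry} from $\sph{2d-1}$ onto $\csd$ in the sense that
\begin{equation*}
    \distc\bigl(\phi^{\texttt{R}\to \texttt{C}}(\bx),\phi^{\texttt{R}\to \texttt{C}}(\by)\bigr) = \distr(\bx,\by)\qquad \text{for all } \bx,\by\in\sph{2d-1}.
\end{equation*}
Once this is established, all three identities in \eqref{eq:sep.cover.meshratio.sph.cd} follow immediately: the separation distance is a minimum of pairwise geodesic distances, so it is preserved; the covering radius is $\max_{\bz}\min_i\distc(\bz,\phi^{\texttt{R}\to \texttt{C}}(\bx_i))$, and since $\phi^{\texttt{R}\to \texttt{C}}$ is onto $\csd$ we may substitute $\bz=\phi^{\texttt{R}\to \texttt{C}}(\by)$ and reduce to $\max_{\by}\min_i\distr(\by,\bx_i)$; and the mesh ratio is the quotient of (twice) the covering radius by the separation, hence also preserved.

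The core computation is the distance identity. First I would verify that $\phi^{\texttt{R}\to \texttt{C}}$ is a bijection: it is $\R$-linear, it maps $\sph{2d-1}$ into $\csd$ because $\|\phi^{\texttt{R}\to \texttt{C}}(\bx)\|^2=\sum_{j=1}^d(x_{2j-1}^2+x_{2j}^2)=\|\bx\|^2=1$, and its inverse is the obvious map splitting each complex coordinate into its real and imaginary parts, which sends $\csd$ back onto $\sph{2d-1}$. Then the key algebraic fact is that $\phi^{\texttt{R}\to \texttt{C}}$ intertwines the real Euclidean inner product on $\R^{2d}$ with the \emph{real part} of the Hermitian inner product on $\C^d$: writing $u_j=x_{2j-1}+x_{2j}\imu$ and $v_j=y_{2j-1}+y_{2j}\imu$, we have $\re(u_j\conj{v_j}) = x_{2j-1}y_{2j-1}+x_{2j}y_{2j}$, so summing over $j$ gives $\re\innerh{\phi^{\texttt{R}\to \texttt{C}}(\bx)}{\phi^{\texttt{R}\to \texttt{C}}(\by)} = \bx\cdot\by$. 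Applying $\arccos$ to both sides and recalling the definitions $\distr(\bx,\by)=\arccos(\bx\cdot\by)$ and $\distc(\cu,\cv)=\arccos(\re\innerh{\cu}{\cv})$ yields the desired identity.

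Having this, I would then spell out the three reductions carefully. For separation: $\sepc_{\phi^{\texttt{R}\to \texttt{C}}(X_N)} = \min_{i\neq j}\distc(\phi^{\texttt{R}\to \texttt{C}}(\bx_i),\phi^{\texttt{R}\to \texttt{C}}(\bx_j)) = \min_{i\neq j}\distr(\bx_i,\bx_j) = \sepr_{X_N}$. For covering: using surjectivity of $\phi^{\texttt{R}\to \texttt{C}}$, every $\bz\in\csd$ is $\phi^{\texttt{R}\to \texttt{C}}(\by)$ for a unique $\by\in\sph{2d-1}$, and $\by\mapsto\bz$ is a bijection, so $\coverc_{\phi^{\texttt{R}\to \texttt{C}}(X_N)} = \max_{\by\in\sph{2d-1}}\min_i\distc(\phi^{\texttt{R}\to \texttt{C}}(\by),\phi^{\texttt{R}\to \texttt{C}}(\bx_i)) = \max_{\by\in\sph{2d-1}}\min_i\distr(\by,\bx_i) = \coverr_{X_N}$. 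The mesh-ratio identity then follows by dividing.

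The step that requires the most care — and the only genuine obstacle — is confirming the inner-product identity and, relatedly, making sure the distance functions on the two spaces are the genuinely ``correct'' Riemannian geodesic distances appearing in Definition~\ref{defn:meratr} and in the complex-sphere definitions, rather than some chordal surrogate; once the conventions ($\distr=\arccos(\bx\cdot\by)$ on $\sph{2d-1}$ and $\distc=\arccos(\re\innerh{\cdot}{\cdot})$ on $\csd$, both as defined in the paper) are pinned down, the argument is a short direct verification with no analytic subtlety. I would also note in passing that $\phi^{\texttt{R}\to \texttt{C}}$ preserves symmetry (already observed in the text), which is consistent with the isometry picture since $\phi^{\texttt{R}\to \texttt{C}}(-\bx)=-\phi^{\texttt{R}\to \texttt{C}}(\bx)$ by $\R$-linearity.
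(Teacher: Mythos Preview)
Your proposal is correct and follows essentially the same approach as the paper: the key identity $\re\innerh{\phi^{\texttt{R}\to \texttt{C}}(\bx)}{\phi^{\texttt{R}\to \texttt{C}}(\by)}=\bx\cdot\by$ and hence $\distc(\phi^{\texttt{R}\to \texttt{C}}(\bx),\phi^{\texttt{R}\to \texttt{C}}(\by))=\distr(\bx,\by)$ is exactly what the paper records as \eqref{eq:distc.equiv.distr}, and the paper then derives the separation and covering identities from it in the same way. The only cosmetic difference is that for the covering radius the paper argues via a pair of inequalities (restricting to images of points in $\sph{2d-1}$ for one direction and using surjectivity for the other), whereas you invoke the bijection directly to reparametrize the outer maximum; the content is identical.
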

We give the proof in Section~\ref{sec:uniformproof}.

We mention that the mapping $\phi^{\texttt{R}\to \texttt{C}}$ in \eqref{eq:map.sph.csd} preserves the distance.
By evaluating the $\ell_2$ norm of a complex vector, we have 
\begin{align*}
    \|\phi^{\texttt{R}\to \texttt{C}}(\bx)-\phi^{\texttt{R}\to \texttt{C}}(\by)\|^{2} 
    &= 2-2\re\bigl(\phi^{\texttt{R}\to \texttt{C}}(\bx)\cdot\conj{\phi^{\texttt{R}\to \texttt{C}}(\by)}\bigr)\\ 
     &= 2 -2 \cos\bigl(\distc(\phi^{\texttt{R}\to \texttt{C}}(\bx),\phi^{\texttt{R}\to \texttt{C}}(\by))\bigr)\,,
\end{align*}
where the $\distc(\bz,\bz')$ is the geodesic distance of $\bz,\bz'\in \csd$.
By \eqref{eq:map.sph.csd},
\begin{align*}
    \|\phi^{\texttt{R}\to \texttt{C}}(\bx)-\phi^{\texttt{R}\to \texttt{C}}(\by)\|^{2} = \|\bx-\by\|^{2} = 2-2\:\bx\cdot \by = 2-2\cos\bigl(\distr(\bx,\by)\bigr)\,,
\end{align*}
where the $\distr(\bx,\by)$ is the geodesic distance of $\bx,\by\in\sph{2d-1}$.
Then, for $d\ge2$, we have
\begin{align}
    &\re\bigl(\phi^{\texttt{R}\to \texttt{C}}(\bx)\cdot\conj{\phi^{\texttt{R}\to \texttt{C}}(\by)}\bigr) = \bx\cdot \by\notag\\
    & \distc(\phi^{\texttt{R}\to \texttt{C}}(\bx),\phi^{\texttt{R}\to \texttt{C}}(\by)) = \distr(\bx,\by),\label{eq:distc.equiv.distr}
\end{align}
where $\bx,\by\in\sph{2d-1}$, which establishes the claim.

\section{Proposed Algorithm for Spherical Designs}\label{sec:optsd}
We now introduce the proposed deterministic spherical designs that are {\rsd}, and hence {\tcsd}, satisfying the quasi-uniformity. We exploit the high-dimensional numerical optimization \cite{AnChSlWo2010,ChFrLa2011,ChWo2006,ChWo2018,Womersley2018} to numerically implement this real spherical design.

\subsection{Variational characteristics of  {\rsd}}\label{sec:varchar}
We now give an equivalent characterization of a  {\rsd} in terms of {\em zonal kernel functions}. 
\begin{definition}
Take a Hilbert space $H$ with the inner product $x\cdot y$ for $x,y\in H$. A \emph{zonal (kernel) function} $K(x,y)$ is a mapping from $H\times H$ to $\R$ that depends only on the inner product of $x\cdot y$.
\end{definition}

For $\alpha,\beta>0$ and $\ell=0,1,\dots$, let $\Jcb{\ell}(u)$, $u\in[-1,1]$ be the Jacobi polynomial of degree $\ell$ with the weight $w_{\alpha,\beta}(u):=(1-u)^{\alpha}(1+u)^{\beta}$. The Jacobi polynomials $\Jcb{\ell}$ form an orthogonal system for the weighted $L_2$ space $L_2(w_{\alpha,\beta})$ on $[-1,1]$. See \cite[Chapter~4]{Szego1975} for details.
For $\ell\ge0$, the \emph{(normalized) Legendre polynomial} (or normalized ultraspherical or Gegenbauer polynomials) of degree $\ell$ for $d\ge2$ is the normalized Jacobi polynomial of degree $\ell$ with the weight $(1-u^{2})^{\frac{d-2}{2}}$:
\begin{equation}
	\Legen{\ell}(u) =\frac{ \Jcb[\frac{d-2}{2},\frac{d-2}{2}]{\ell}(u)}{\Jcb[\frac{d-2}{2},\frac{d-2}{2}]{\ell}(1)},\quad u\in[-1,1].
\end{equation}
The $\Legen{\ell}(u)$, $\ell\ge0$, form an orthogonal system with weight $(1-u^{2})^{\frac{d-2}{2}}$ on $[-1,1]$.
By \cite[Theorem~7.32.1]{Szego1975},
\begin{equation}
	|\Legen{\ell}(u)|\le 1,\quad u\in[-1,1].
\end{equation}
For $\ell\ge0$, the $\Legen{\ell}(\bx\cdot \by)$, $\bx,\by\in\sph{d}$ is a zonal function and satisfies the \emph{addition theorem}
\begin{equation}\label{eq:addthm}
	\sum_{m=1}^{Z(d,\ell)}\shY(\bx)\shY(\by) = \frac{Z(d,\ell)}{\sigma(\sph{d})}\Legen{\ell}(\bx\cdot \by),
\end{equation}
where $Z(d,\ell)$ is the dimension of the space of spherical harmonics of degree $\ell$ given by
\begin{equation}\label{eq:zdl}
	Z(d,\ell)=(2\ell+d-1)\frac{\Gamma(\ell+d-1)}{\Gamma(d)\Gamma(\ell+1)}\,.
\end{equation}
Note that asymptotically when $\ell\to \infty$, $Z(d,\ell)\asymp (\ell+1)^{d-1}$.

The following shows an equivalent characterization of {\rsd} by a radial basis function on the sphere. Let the radial basis function $\spdka(u)$ be a real-valued polynomial on $[-1,1]$ of degree $t$ with the expansion in terms of Legendre polynomials
\begin{equation}\label{eq:psit.expan.legen}
	\spdka(u) = \sum_{\ell=0}^{t}a_{\ell}\Legen{\ell}(u),\quad u\in [-1,1],
\end{equation}
where the coefficients $a_{\ell} := \int_{-1}^{1}\spdka(u)\Legen{\ell}(u)(1-u^{2})^{\frac{d-2}{2}}\IntD{u}$, $\ell =0,\dots,t$.
In particular, $a_{0} = \int_{-1}^{1}\spdka(u)(1-u^{2})^{\frac{d-2}{2}}\IntD{u}$.
We let $\spdk$ be the $\spdka$ subtracting the first term (i.e. the constant term) in the Legendre expansion:
\begin{equation}\label{eq:spdk}
	\spdk(u) := \spdka(u) - a_0,\quad u\in[-1,1].
\end{equation}

\begin{proposition}[\cite{SlWo2009,Womersley2018}]\label{prop:variat.char.sph.des}
	For $t\in \mathbb{N}$, let $\spdk$ be a real-valued polynomial on $[-1,1]$ of degree $t$ with any sequence of coefficients satisfying $a_{\ell}>0$, $\ell=1,\dots,t$ in the expansion \eqref{eq:psit.expan.legen}. Then, the point set $X_{N}:=\{\bx_1,\dots,\bx_N\}\subset \sph{d}$ is a  {\rsd} if and only if
	\begin{equation}\label{eq:phit.re.sph.des}
		\sum_{i=1}^{N}\sum_{j=1}^{N}\spdk(\bx_i\cdot\bx_j)=0.
	\end{equation}
\end{proposition}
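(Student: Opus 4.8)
The plan is to reduce the double sum in \eqref{eq:phit.re.sph.des} to a sum of squares via the addition theorem, and then read off the $t$-design property from positivity of the coefficients. First I would expand $\spdk$ in Legendre polynomials. By \eqref{eq:spdk} the constant term is removed, so $\spdk(u)=\sum_{\ell=1}^{t}a_\ell\Legen{\ell}(u)$ with $a_\ell>0$ for $\ell=1,\dots,t$. Substituting this into \eqref{eq:phit.re.sph.des} and interchanging the (finite) sums gives
\begin{equation*}
	\sum_{i=1}^{N}\sum_{j=1}^{N}\spdk(\bx_i\cdot\bx_j) = \sum_{\ell=1}^{t}a_\ell\sum_{i=1}^{N}\sum_{j=1}^{N}\Legen{\ell}(\bx_i\cdot\bx_j).
\end{equation*}

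Next I would apply the addition theorem \eqref{eq:addthm} to replace $\Legen{\ell}(\bx_i\cdot\bx_j)$ by $\frac{\sigma(\sph{d})}{Z(d,\ell)}\sum_{m=1}^{Z(d,\ell)}\shY(\bx_i)\shY(\bx_j)$, where $\{\shY\}_{m=1}^{Z(d,\ell)}$ is an $L_2$-orthonormal basis of $\mathrm{Harm}_\ell(\Rd[d+1])$ (real-valued, so no conjugation is needed). After interchanging sums once more, the double sum over $i,j$ factors:
\begin{equation*}
	\sum_{i=1}^{N}\sum_{j=1}^{N}\spdk(\bx_i\cdot\bx_j) = \sum_{\ell=1}^{t}a_\ell\,\frac{\sigma(\sph{d})}{Z(d,\ell)}\sum_{m=1}^{Z(d,\ell)}\left(\sum_{i=1}^{N}\shY(\bx_i)\right)^{2}.
\end{equation*}
Since $a_\ell>0$, $\sigma(\sph{d})>0$ and $Z(d,\ell)>0$, every summand on the right is nonnegative, so the whole expression equals $0$ if and only if $\sum_{i=1}^{N}\shY(\bx_i)=0$ for all $\ell=1,\dots,t$ and all $m=1,\dots,Z(d,\ell)$.

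Finally I would invoke the harmonic-polynomial characterization of a {\rsd} recalled just after the definition in Section~\ref{Section:SD}: $X_N$ is a {\rsd} if and only if $\sum_{\bx\in X_N}p(\bx)=0$ for every $p\in\mathrm{Harm}_\ell(\Rd[d+1])$ with $1\le\ell\le t$. Because $\{\shY\}_{m}$ spans $\mathrm{Harm}_\ell(\Rd[d+1])$, this condition is equivalent to the vanishing of all $\sum_{i=1}^{N}\shY(\bx_i)$, which is exactly what we derived. This proves the equivalence. There is no real obstacle here; the only points requiring care are ensuring the constant term genuinely drops out (so that the $\ell=0$ term, which would contribute the nonzero value $N^2$, is absent) and that the chosen spherical harmonics are real and orthonormal so the addition theorem applies verbatim and the squares are genuine squares of reals.
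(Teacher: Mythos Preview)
Your argument is correct and is exactly the standard one: expand $\spdk$ in Legendre polynomials, apply the addition theorem \eqref{eq:addthm} to turn the double sum into a nonnegative weighted sum of squares $\sum_{\ell=1}^{t} a_\ell \frac{\sigma(\sph{d})}{Z(d,\ell)}\sum_m\bigl(\sum_i \shY(\bx_i)\bigr)^2$, and use $a_\ell>0$ to conclude equivalence with the harmonic characterization of a {\rsd}. The paper itself does not supply a proof here---it cites \cite{SlWo2009,Womersley2018}---but your derivation is precisely the sum-of-squares structure the paper invokes later in Section~\ref{sec:optsd} when writing $V_{t,N,\spdk}(X_N)=\frac{1}{N^2}\br^T\bD\br$ with $\bD$ diagonal and positive, so your approach matches.
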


\begin{example}\label{eg:zonalpolyreal}
An example of a zonal polynomial $\spdk$, which has a simple formula and strictly positive coefficients $a_{\ell}$, $\ell=1,\dots,t$, in the Legendre expansion is
(\cite{SlWo2009} for $\sph{2}$ and from \cite{Womersley2018} for $\sph{d}, d\geq3$):
	\begin{equation}\label{eq:eg3.psi}
		\spdk(u) = P_{t}^{(\frac{d}{2},\frac{d-2}{2})}(u)-1,\quad u\in [-1,1],
	\end{equation}
	where $a_{\ell}= Z(d,\ell)$ for $\ell=1,\dots,t$. The simple formula for $\spdk$ in \eqref{eq:eg3.psi} provides a well-scaled objective function which is easy to evaluate and differentiate for a numerical optimization procedure to search for {\rsd} with a large number of nodes on $\sph{d}$.  
\end{example}

\subsection{Numerical computation of triangular complex spherical designs}
From the above discussion, we can utilise real spherical $t$-designs on $\sph{2d-1}$ to find triangular complex $t$-designs on $\csd$ for $d\geq 2$.
By Proposition \ref{prop:variat.char.sph.des}, for a {\rsd} on $\sph{2d-1}$, ideally, one would solve the minimization problem
\begin{equation}\label{eq:optproblsd}
\begin{array}{cl}
    \displaystyle\min_{X_N\subset\sph{2d-1}} & \rho(X_N)\\ 
    \mbox{\small Subject to} & V_{t,N,\spdk}(X_N) = 0
\end{array}
\end{equation}
where $X_N:=\{\bx_1,\dots,\bx_N\}$ is a point set on $\sph{2d-1}$,
\begin{equation} \label{eq:V}
    V_{t,N,\spdk}(X_N):= \frac{1}{N^2}\sum_{i=1}^{N}\sum_{j=1}^{N}\spdk(\bx_i\cdot\bx_j)\,,
\end{equation}
and $\rho(X_N)$ is the mesh ratio of the point set $X_N$ given by \eqref{eq:meshratio}.
Note that there are in total $N(2d-1)$ variables in \eqref{eq:optproblsd}. In fact, except for a measure zero subset, $\bx:=(x_1,\dots,x_{2d})\in\sph{2d-1}$ can be parametrized by
\begin{align}
    x_1 &= \cos(\phi_1),\nonumber\\
    x_i &= \Pi_{j=1}^i\sin(\phi_j)\cos(\phi_i), \quad i=2,\dots,2d-1,\label{Parametrization Sd}\\
    x_{2d} &= \Pi_{j=1}^{2d-1}\sin(\phi_j)\nonumber
\end{align}
where $\phi_1\in[0,\pi]$ and $\phi_2,\dots,\phi_{2d-1}\in[0,2\pi]$.
As both $V_{t,N,\spdk}(X_N)$ and $\rho(X_N)$ are invariant to a rotation of the point set $X_N$, the point set can be normalized so for $j = 1,\ldots,2d-1$, the point $\bx_j$ has components $j+1,\ldots,2d$ zero.
Taking the rotation of the point set on $\mathbb{S}^{2d-1}$ into account
reduces the number of variables to
$(2d-1)(N-d)$ (assuming $N \geq 2d$) 
where we have taken account of the dimension $d(2d-1)$ of rotation matrix on $\sph{2d-1}$. Clearly, this is a large-scale non-linear optimization problem when $N$ is large.
In the following, we focus on some key issues of the numerical solution of \eqref{eq:optproblsd}.
One complicating factor is that $\rho(X_N)$ contains max-min factors producing a non-smooth function.

The number $N$ of nodes of the {\rsd} on $\sph{2d-1}$ determines the difficulty of the optimization problem: the fewer, the easier the optimization is. However, to satisfy the definition of {\rsd}, $N$ cannot be as small as we wish. A lower bound $N^*_t$ for $N$ was established in \cite{DeGoSe1977} using the property of spherical harmonics: 
\begin{equation}
    N^*_t :=\left\{\begin{array}{ll}
    \displaystyle 2\genfrac(){0pt}{}{2d+k-1}{2d-1}, & \mbox{for $t = 2k+1$ is odd},\\[4mm]
    \displaystyle  \genfrac(){0pt}{}{2d+k-1}{2d-1} + \genfrac(){0pt}{}{2d+k-2}{2d-1}, & \mbox{for $t = 2k$ is even}.
    \end{array}\right.
\end{equation}
This means that a {\rsd} on $\sph{2d-1}$ must contain at least $N^*_t$ points where $N^*_t = \frac{2^{2-2d}}{\Gamma(2d)} t^{2d-1} + \bigo{}{t^{2d-2}}$. 

In general, the minimization problem \eqref{eq:optproblsd} for a {\rsd} with $N^*_t$ nodes (which is called \emph{tight {\rsd}}) does not always have a solution as on $\sph{2d-1}$, $d\geq2$,{\rsd} with $N^*_t$ nodes exist for a few small values of $t$ \cite{bannaibannai2009}. Thus, one has to use more than $N^*_t$ points to obtain a feasible problem \eqref{eq:optproblsd}.
By \cite{DeGoSe1977,SlWo2009}, the condition \eqref{eq:phit.re.sph.des} with positive coefficients $a_{\ell}$, $\ell=1,\dots,t$, is equivalent to the equations 
\begin{equation}\label{eq:optsd2}
    r_{\ell,m}(X_N) := \sum_{j=1}^N \shY(\bx_j)=0,
    \qquad m=1,\dots,Z(2d-1,\ell), 
    \quad \ell=1,\dots,t,
\end{equation}
where $Z(2d-1,t)$ is given by \eqref{eq:zdl}. Then, there are $\sum_{\ell=1}^{t}Z(2d-1,\ell)=Z(2d,t)-1$ conditions. To solve \eqref{eq:optproblsd}, the number of free parameters $(2d-1)(N-d)$ generically should be at least the number of conditions. Thus, we can use $\widehat{N}_t$ points for the spherical $t$-design, where
\begin{align} 
    \widehat{N}_t & :=
    \left\lceil\frac{1}{2d-1}\left((2t+2d-1)
    \frac{\Gamma(t+2d-1)}{\Gamma(2d) \Gamma(t+1)}-1\right)+d\right\rceil \label{eq:Nhat1}\\
    & = \frac{2}{(2d-1)\Gamma(2d)}
    \left(
    t^{2d-1} + (2d^2-2d+\tfrac{1}{2}) t^{2d-2} + \bigo{}{t^{2d-3}} \right). \label{eq:Nhat2}
\end{align}
In the numerical optimization problem \eqref{eq:optproblsd}, we can use the spherical parametrization of $\widehat{N}_t$ points on $\sph{2d-1}$. To find a feasible point satisfying the spherical design constraint $V_{t,N,\spdk}(X_N) = 0$ to high accuracy per se is a challenging problem.

We may further take the symmetry of $\sph{2d-1}$ into account.
As any symmetric point set $X_N$ satisfies $\sum_{i=1}^N\shY(\bx_i)=0$ for all odd degree $\ell\geq 1$ and $m = 1,\ldots,Z(2d-1,\ell)$.
The number of constraints from the even degree polynomials are
$\sum_{\ell=1}^{(t-1)/2}Z(2d-1,2\ell) = \frac{\Gamma(t+2d-1)}{\Gamma(2d)\Gamma(t)}-1$, where $t$ is odd.
We can then use $\overline{N}_t$ points, where
\begin{align} 
    \overline{N}_t & := 2\left\lceil\frac{1}{2d-1}
    \left(\frac{\Gamma(t+2d-1)}{\Gamma(2d)\Gamma(t)}-1   \right)+d\right\rceil \label{eq:Nbar1}\\
    & = \frac{2}{(2d-1)\Gamma(2d)}
    \left(
    t^{2d-1} + (2d^2-3d+1) t^{2d-2} + \bigo{}{t^{2d-3}} \right). \label{eq:Nbar2}
\end{align}
The numerical optimization uses the spherical parametrization of $\overline{N}_t/2$ points, with the symmetry constraints 
$X_{\overline{N}_t} = [X_{\overline{N}_t/2} \ {-\!X_{\overline{N}_t/2}}]$
used in the evaluation of the spherical design constraints and objective in (\ref{eq:optproblsd}).
See \cite{Womersley2018} for details.

Note that the optimal order leading terms $t^{2d-1}$ in \eqref{eq:Nhat2} and \eqref{eq:Nbar2} are the same with coefficient $2/((2d-1)\Gamma(2d))<1$.  The imposition of symmetry only makes a difference in the coefficient of $t^{2d-2}$.
However for fixed $t$, the optimization problem for a symmetric point set has roughly half the number of variables and half the number of constraints.


By \cite{BoRaVi2014} and \eqref{eq:stdcovering}, a sequence of {\rsd}s for $\sph{2d-1}$ with the optimal order for the number of points and good mesh ratio always exists. 

The optimization can proceed as follows. Fix $d,t\in \mathbb{N}$, and select $N\in \mathbb{N}$ (for example $N = \widehat{N}_t$).
The algorithm is composed of two steps, which may need to be repeated using different starting points:
\begin{itemize}
\item[] (Step 1)~~Find a feasible point for \eqref{eq:optproblsd} by mininizing $V_{t,N,\spdk}(X_N)$ with an appropriate initial point set. 
    \item[] (Step 2)~~If the feasible point is not isolated use it as a starting point to solve \eqref{eq:optproblsd}.
\end{itemize}
Finding an initial point set with low mesh ratio for minimizing \eqref{eq:optproblsd} is critical. One option is to use an initial set of points with good geometry, for example, equal area points for $\sph{2d-1}$ and $d\geq2$.
The large-scale minimization problem \eqref{eq:optproblsd} is difficult as there typically are many local minima with $V_{t,N,\spdk}(X_N) > 0$, so $X_N$ is not a spherical design.

Here, we find a feasible point for \eqref{eq:optproblsd} such that $ V_{t,N,\spdk}(X_N) = 0$ by using the zonal polynomial to evaluate $V_{t,N,\spdk}(X_N)$ and its gradients with respect to the spherical parametrization of $X_N$.
Note that there is a concise formula for zonal polynomials, see Example~\ref{eg:zonalpolyreal}.
As in \cite{Womersley2018} the variational form can be expressed as a strictly positive sum of squares $V_{t,N,\spdk}(X_N) = \frac{1}{N^2}\br^T \bD \br$  where $\bD$ is a diagonal matrix with positive elements on the diagonal.\footnote{On $\sph{2}$, \eqref{eq:optsd2} provides a better form of the variational function \eqref{eq:optproblsd} as the sum of square structure can be exploited. However, spherical harmonics corresponding to higher dimensions are difficult to compute.} 
If $t$, $d$ are such that the terms inside the ceiling functions in \eqref{eq:Nhat1} or \eqref{eq:Nbar1} are integers, so the 
number of points is such that the number of parameters exactly matches the number of constraints, and the Jacobian of $\br$ is full rank, then the Hessian of $V_{t,N,\spdk}(X_N)$ is positive definite and a point with $V_{t,N,\spdk}(X_N) = 0$ (equivalently $\br = 0$) is isolated. Typically, there are still many isolated feasible points with different mesh ratios $\rho(X_N)$. If the term inside the ceiling functions is not an integer, then there are some degrees of freedom which we can use to minimize the mesh ratio.

\section{Numerical Examples}\label{sec:numer}
Figures~\ref{fig:ssd1} and \ref{fig:ssd2} show the features of a symmetric real spherical $23$-design on $\sph{3}$ (or equivalently $\Omega^2$) with $1,184$ nodes, which is obtained by solving \eqref{eq:optproblsd} with the zonal polynomial in \eqref{eq:eg3.psi}. The stereographic projection of the point set from $\sph{3}$ to $\Rd[3]$ using the pole $\mathbf{e}_1=(1,0,0,0)^T$ is shown in the Figure~\ref{fig:ssd1}. The left panel in  Figure~\ref{fig:ssd2} shows the $700,336$ ordered inner products, the largest of which gives the cosine of the separation distance. The right panel shows the $7,192$ ordered circumradii of the facets of the design, the largest of which gives the covering radius.
\begin{figure}[th]
  \centering
  \includegraphics[width=0.55\textwidth]{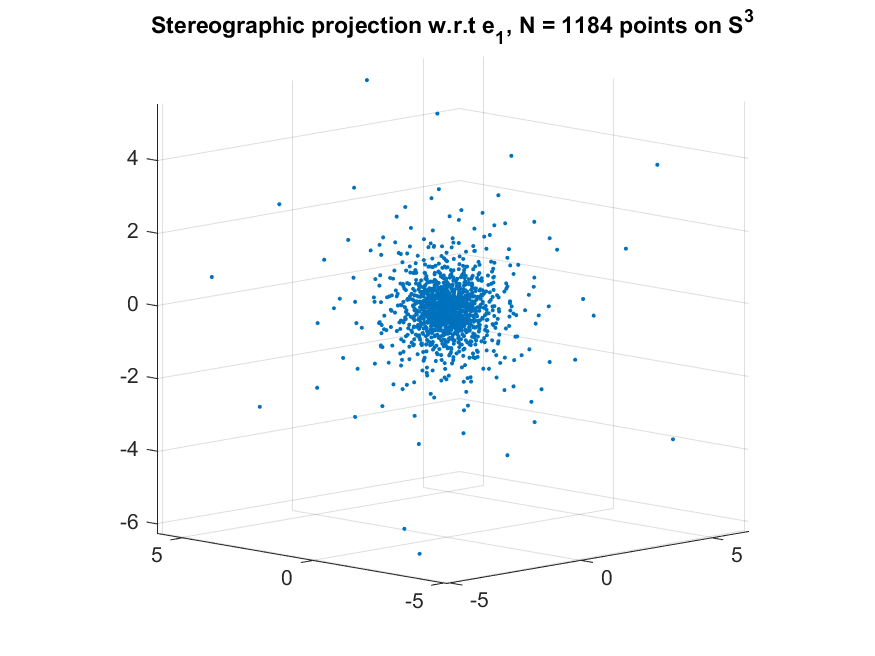}\hspace{2mm}
  \caption{Symmetric real spherical $21$-design on $\sph{3}$ with $1184$ nodes used for $\csd[2]$: Stereographic projection of the nodes onto $\Rd[3]$ with $(1,0,0,0)$ as the pole.}
\label{fig:ssd1}
\end{figure}
  
\begin{figure}[th]
  \begin{minipage}{\textwidth}
  \centering
  \includegraphics[width=0.47\textwidth]{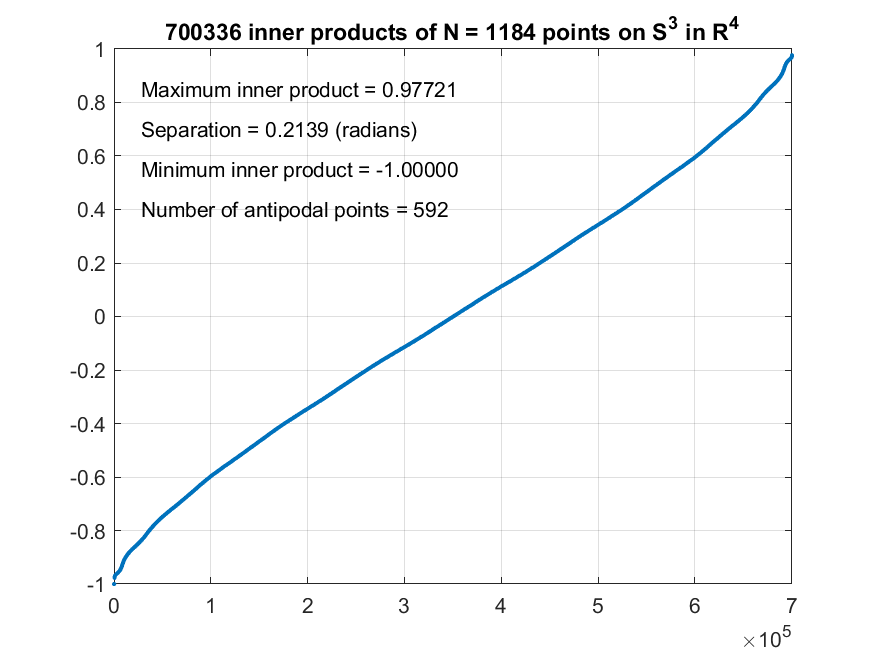}\hspace{2mm}
  \includegraphics[width=0.47\textwidth]{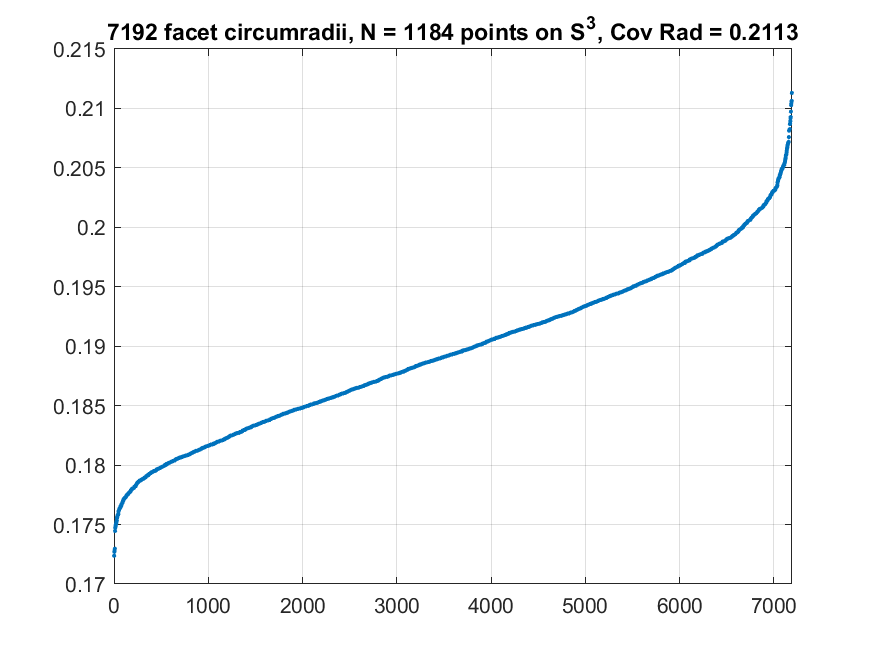}
  \end{minipage}
\begin{minipage}{\textwidth}
\caption{Symmetric real spherical $23$-design on $\sph{3}$ used for $\csd[2]$, with $1,184$ nodes. Left: $700,336$ sorted inner products between nodes with the largest giving the cosine of the separation distance of the design is $0.2139$. Right: Sorted $7,192$ facet circumradii, largest giving the covering radius of the design is $0.2113$.}
\label{fig:ssd2}
\end{minipage}
\end{figure}
Tables~\ref{tab:csd2}--\ref{tab:csd6} show some real spherical $t$-designs with good mesh ratios on $\sph{d}$, $d=3, 5, 7, 9, 11$ from Womersley\footnote{\url{https://web.maths.unsw.edu.au/~rsw/Sphere/EffSphDes/S3SD.html}} \footnote{\url{https://web.maths.unsw.edu.au/~rsw/Sphere/EffSphDes/SdSD.html}} \cite{Womersley2018}. They are used to construct triangular complex spherical $t$-designs on $\csd$ for $d=2,3,4,5,6$.

Tight triangular complex $t$-designs on $\csd$ exist only for certain small values of $t$, see \cite{bannaibannai2009} for the real case. In particular any pair of antipodal points is a tight triangular complex $1$-design for $\csd$ with separation angle $\pi$, covering radius $\pi/2$ and mesh ratio $1$.
A regular simplex of $N = 2d + 1$ points on $\csd$ is a tight triangular complex $2$-design on $\csd$ with mesh ratio $2 \cos^{-1}(1/2d) / \cos^{-1}(-1/2d)$.
Also, the cross-polytope with the $N = 4d$ points $\pm e_j, \pm i e_j, j = 1,\ldots,d$ on $\Omega^d$ is a tight triangular complex $3$-design on $\csd$. As this is an antipodal point set, it is only necessary to establish that complex polynomials in $\mathcal{H}_{k,l}$ for $k + l = 2$ are integrated exactly. For this point set the separation distance is $\pi/2$ and the covering radius is $\cos^{-1}(1/\sqrt{2d})$,
giving a mesh ratio of $(4/\pi) \cos^{-1}(1/\sqrt{2d})$.

As an example of  numerical integration by {\tcsd} on $\csd[2]$ consider the integrand $f(\bx)=1/{|\bx- \bx_0|^2}$, $\bx\in\csd[2]$.
For $|\bx_0| > 1$ the integrand is in $C^{\infty}(\csd[2])$. Here $\bx_0 = (1+\imu, 1+\imu)$, so $|\bx_0|^2 = 4$.
The exact value of the integral
\begin{equation*}
    \int_{\csd[2]}\frac{1}{|\bx-\bx_0|^2}\intc{x} = \frac{1}{|\bx_0|^2}.
\end{equation*}
Figure~\ref{fig:intf} shows the absolute value of the integration error 
\begin{equation*}
    \left|\frac{1}{N}\sum_{i=1}^N f(\bx_i) - \int_{\csd[2]}f(\bx)\intc{x}\right| = \left|\frac{1}{N}\sum_{i=1}^N f(\bx_i) - \frac{1}{|\bx_0|^2}\right|
\end{equation*}
for {\tcsd} with $t$ up to $31$, where the {\tcsd} is constructed from the real symmetric spherical $t$-designs in Table~\ref{tab:csd2}. It is clear the error for numerical integration of $f(\bx)$ by {\tcsd} is around $10^{-9}\sim10^{-12}$ for $t>21$. It illustrates that {\tcsd} in Table~\ref{tab:csd2} can provide precise estimate for integral over the complex sphere.

\begin{figure}
    \centering
    \includegraphics[width=0.48\textwidth]{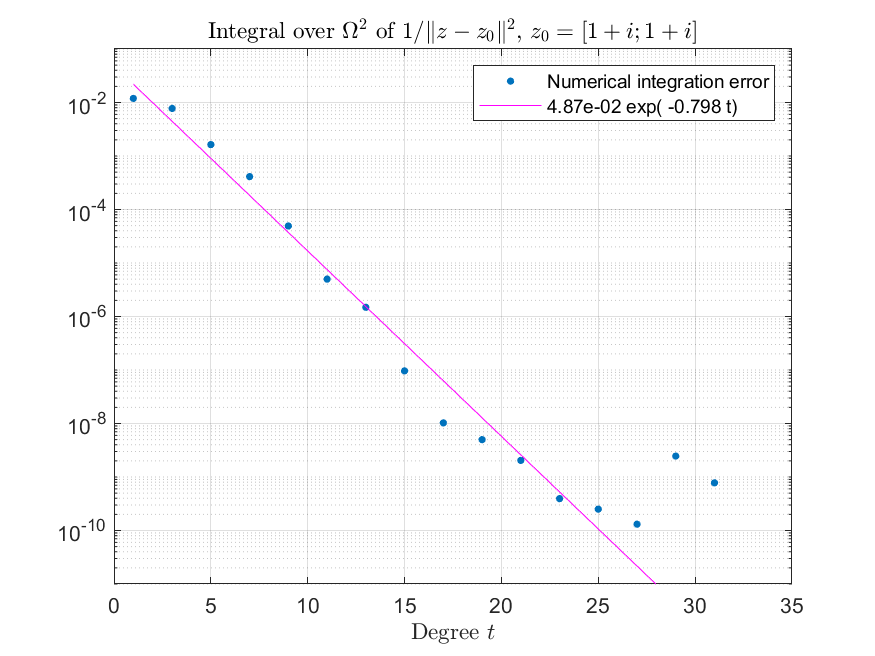}
    \includegraphics[width=0.48\textwidth]{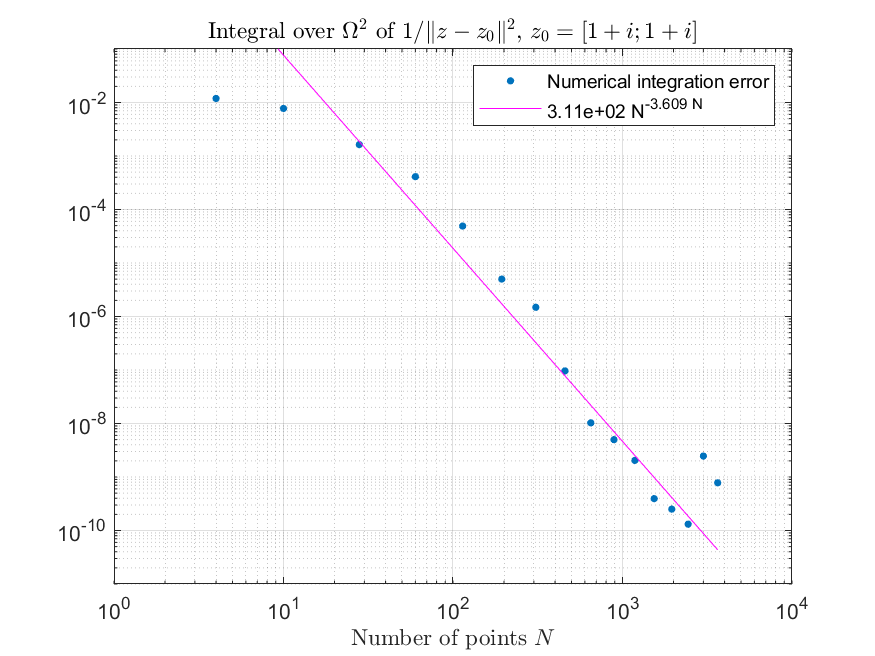}
    \vspace{-4mm}
    \caption{Error of numerical integration vs degree $t$ (left) and number of points $N$ (right) for
    $\bx_0 = (1+\imu,1+\imu)$ on $\csd[2]$ by symmetric triangular complex spherical $t$-designs for odd $t$, $t\leq 31$.}
    \label{fig:intf}
\end{figure}

\begin{table}[ht]
\begin{minipage}{\columnwidth}
\caption{Triangular complex spherical $t$-designs on $\Omega^2$ from real symmetric spherical $t$-designs on $\sph{3}$}
\label{tab:csd2}
\begin{center}
\begin{tabular}{rrrrrr}
\toprule
Degree $t$ & $N$ & $V_{t,N,\varphi_t}$ & Separation & Mesh Norm & Mesh Ratio\\
\midrule
1	& 4	&	0.00e+00	&1.57080 &	1.57080	& 2.0000\\
3 & 8 & -1.73e-17 & 1.57080 & 1.04720 & 1.3333 \\
3	&10	&	-2.58e-16	&1.31812	&0.97760	 &1.4833\\
5	&28	& 9.18e-17	&0.83340	&0.73031 &	1.7526\\
7	&60	&	-1.12e-16	&0.63236	&0.56564 	&1.7890\\
9	&114& -1.40e-16	&0.48633	&0.45479	 & 1.8703\\
11	&194& -2.20e-16	&0.41258	&0.38599	 &1.8711\\
13	&308&	-1.13e-16	&0.34535	&0.32204	 &1.8650\\
15	&458&	-2.74e-17	&0.29137	&0.28774	 &1.9751\\
17	&650&  -7.36e-17	&0.26493	&0.25837	 &1.9505\\
19	&890&	-5.54e-17	&0.23885	&0.23797	 &1.9927\\
21	&1184&	6.63e-17	&0.21389	&0.21131	 &1.9759\\
23	&1538	&-2.50e-16	&0.19990	&0.19515 &1.9524\\
25	&1954	&-2.04e-17	&0.17949	&0.17776	 &1.9808\\
27	&2440	&-8.24e-18	&0.16878	&0.16625	 &1.9701\\
29	&3000	&1.66e-14	&0.15742	&0.15450	 &1.9628\\
31	&3642	&5.06e-15	&0.14612	&0.14621 &2.0013\\
\bottomrule
\end{tabular}
\end{center}
\end{minipage}
\end{table}

\begin{table}[ht]
\begin{minipage}{\columnwidth}
\caption{Triangular complex spherical $t$-designs on $\Omega^3$ from real symmetric spherical $t$-designs on $\sph{5}$}
\label{tab:csd3}
\begin{center}
\begin{tabular}{rrrrrr}
\toprule
Degree $t$ & $N$ & $V_{t,N,\varphi_t}$ & Separation & Mesh Norm & Mesh Ratio\\
\midrule
 3&	   12 & -7.40e-17 & 1.57080	& 1.15026 & 1.4646\\
 5&    56 & -1.89e-16 & 0.93591	& 0.84189 & 1.7991\\
 7&   192 &	-2.15e-17 & 0.65646	& 0.67378 & 2.0528\\
 9&   522 &	-5.50e-17 & 0.51711	& 0.55463 & 2.1451\\
11&	 1208 &	 6.64e-18 & 0.40373	& 0.46572 & 2.3071\\
\bottomrule
\end{tabular}
\end{center}
\end{minipage}
\end{table}

\begin{table}[ht]
\begin{minipage}{\columnwidth}
\caption{Triangular complex spherical $t$-designs on $\Omega^4$ from real symmetric spherical $t$-designs on $\sph{7}$}
\label{tab:csd4}
\begin{center}
\begin{tabular}{rrrrrr}
\toprule
Degree $t$ & $N$ & $V_{t,N,\varphi_t}$ & Separation & Mesh Norm & Mesh Ratio\\
\midrule
3&	16	&	2.22e-16 &	1.57080	&1.20943	 &1.5399\\
5&	102	& -1.40e-18	&0.89818	&0.92098	 &2.0508\\
7&	498	& 2.17e-16	&0.63925	&0.73368	 &2.2954\\
\bottomrule
\end{tabular}
\end{center}
\end{minipage}
\end{table}

\begin{table}[ht]
\begin{minipage}{\columnwidth}
\caption{Triangular complex spherical $t$-designs on $\Omega^5$ from real symmetric spherical $t$-designs on $\sph{9}$}
\label{tab:csd5}
\begin{center}
\begin{tabular}{rrrrrr}
\toprule
Degree $t$ & $N$ & $V_{t,N,\varphi_t}$ & Separation & Mesh Norm & Mesh Ratio\\
\midrule
3&	20 &-3.79e-16 &	1.57080	& 1.24905 &	1.5903\\
5&	170 & 1.03e-16	&0.95756 &0.96636 & 2.0184\\
\bottomrule
\end{tabular}
\end{center}
\end{minipage}
\end{table}

\begin{table}[ht]
\begin{minipage}{\columnwidth}
\caption{Triangular complex spherical $t$-designs on $\Omega^6$ from real symmetric spherical $t$-designs on $\sph{11}$}
\label{tab:csd6}
\begin{center}
\begin{tabular}{rrrrrr}
\toprule
Degree $t$ & $N$ & $V_{t,N,\varphi_t}$ & Separation & Mesh Norm & Mesh Ratio\\
\midrule
3&	24	& 1.62e-17	&1.57080	&1.27795	 &1.6271\\
5&	260	& 9.02e-16	&0.94163	&1.00814	 &2.1413\\
\bottomrule
\end{tabular}
\end{center}
\end{minipage}
\end{table}

\section{Discussion and Conclusion}\label{sec:conclusion}
We propose new uniform designs on the complex unit sphere $\csd$, which are efficient in that they have a low number of points to integrate complex spherical polynomials of a given degree and have good geometric properties as measured by their mesh ratio. Their theoretical properties are summarized, and the associated optimization procedure detailed. We also provide a theoretical justification of the existence of the {\tcsd}s and {\scsd}s with the optimal order number of points.

This exploration led to several interesting theoretical problems. The smallest \emph{coefficient} $c$ such that a spherical $t$-design on $\csd$ exists for all $N \geq c\: t^{2d-1}$ is not known, cf. \eqref{eq:Nhat2} and \eqref{eq:Nbar2}. In particular do triangular complex $t$-designs exist on $\csd$ with $N$ given by \eqref{eq:Nhat1} or \eqref{eq:Nbar1} for all $t$ and $d \geq 2$? A variational characterization of rectangular complex $(k, l)$ designs and associated optimization approach may provide extra freedom, although it is unlikely to provide designs with special symmetry properties. The symmetric {\rsd} approach provides a {\rsd} on the real projective space $\mathbb{R}P^{d}$. A natural question to ask is if it is possible to construct a {\rsd} over more general space with symmetry or quotient structure or even a general manifold. Directly solving the optimization problem \eqref{eq:optproblsd}, which is both nonconvex and nonsmooth, is an intriguing challenge.

\section{Proofs}\label{sec:proof}
In this section we give the detailed proofs of previous results.
\subsection{Proof for Theorem~\ref{thm:existence_complex_design}}
We now prepare some notations for the proof of Theorem~\ref{thm:existence_complex_design}.
 Let $\mathcal{P}_t$ be the Hilbert space of harmonic polynomials on $\Omega^d$ of degree at most $t$ such that 
$$
\int_{\Omega^d} P(\bz) \intc{z} = 0
$$
for all $P\in \mathcal{P}_t$ equipped with the usual inner product
$$
\langle P, Q \rangle := \int_{\Omega^d} P(\bz) Q(\bz) \intc{z}.
$$
By the Riesz representation theorem, for each point $\bz \in \csd$, there exists a unique polynomial $G_{\bz} \in \mathcal{P}_t$ such that
$$
\langle G_{\bz}, Q \rangle = Q(\bz) \quad \mbox{for all $Q \in \mathcal{P}_t$.}
$$
Then, the set of points $\bz_1,\bz_2, \cdots,\bz_N \in \Omega^d$ forms a complex $(k,l)$-design if and only if 
$$
G_{\bz_1} + \cdots + G_{\bz_N} =0.
$$
Let $\nabla$ be the gradient on $\csd$: for $\bz=(z_1,\dots,z_d)$,
\begin{equation*}
    \nabla f(\bz) = (f_1(\bz),\dots,f_d(\bz)),\quad 
    f_k=\frac{\partial f(\bz)}{\partial z_k} -z_k\sum_{j=1}^d z_j \frac{\partial f(\bz)}{\partial z_j},\quad  k=1,\dots,d.
\end{equation*}

Define an open subset $\mathcal{B}$ of the vector space $\mathcal{P}_t$ by
\begin{equation} \label{eq:B}
\mathcal{B} :=\left \{ P \in \mathcal{P}_t \:\Big| \int_{\csd} |\nabla P(\bz)| \intc{z} <1 \right\}.
\end{equation}
Then, the boundary of $\mathcal{B}$ is
\begin{equation*}
\partial\mathcal{B} :=\left \{ P \in \mathcal{P}_t \:\Big| \int_{\csd} |\nabla P(\bz)| \intc{z} =1 \right\}.
\end{equation*}
We define the mapping $L : (\Omega^d)^N \rightarrow \mathcal{P}_t$ by 
$$
(\bz_1,\cdots, \bz_N) \rightarrow G_{\bz_1}+ \cdots + G_{\bz_N}.
$$

The main recipe to prove the main Theorem~\ref{thm:existence_complex_design} is the following lemma about the Brouwer degree theory.
\begin{lemma}[{\cite[Theorems~1.2.6 and 1.2.9]{cho2006topological}}]\label{lem:brower_degree} 
Let $f: \mathbb{R}^n \rightarrow \mathbb{R}^n$ be a continuous mapping and $\mathcal{D}$ an open bounded subset with boundary $\partial \mathcal{D}$ such that $0 \in \mathcal{D} \subset \mathbb{R}^n$. If $\langle x, f(x) \rangle > 0$ for all $ x \in \partial \mathcal{D}$, then there exists $x \in \mathcal{D}$ satisfying $f(x)=0$.
\end{lemma}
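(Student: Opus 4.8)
The statement is the standard ``nonvanishing on the boundary'' criterion for the Brouwer topological degree, and the plan is to deduce it from the three basic properties of the degree: normalization, homotopy invariance, and solvability. Recall that for a continuous map $g\colon\overline{\mathcal{D}}\to\mathbb{R}^n$ on the closure of a bounded open set $\mathcal{D}$ with $0\notin g(\partial\mathcal{D})$, the integer $\deg(g,\mathcal{D},0)$ is defined and satisfies: (i) $\deg(\mathrm{id},\mathcal{D},0)=1$ whenever $0\in\mathcal{D}$; (ii) if $\deg(g,\mathcal{D},0)\neq0$, then $g$ has a zero in $\mathcal{D}$; and (iii) if $H\colon\overline{\mathcal{D}}\times[0,1]\to\mathbb{R}^n$ is continuous and $0\notin H(\partial\mathcal{D}\times[0,1])$, then $\deg(H(\cdot,0),\mathcal{D},0)=\deg(H(\cdot,1),\mathcal{D},0)$. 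These are exactly the facts isolated in \cite[Theorems~1.2.6 and 1.2.9]{cho2006topological}.

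First I would observe that the hypothesis $\langle x,f(x)\rangle>0$ for $x\in\partial\mathcal{D}$ forces $f(x)\neq0$ on $\partial\mathcal{D}$, so $\deg(f,\mathcal{D},0)$ is well defined. Next I would introduce the straight-line homotopy
\[
H(x,s):=(1-s)\,x+s\,f(x),\qquad x\in\overline{\mathcal{D}},\ s\in[0,1],
\]
which is continuous and interpolates between the identity at $s=0$ and $f$ at $s=1$. The point requiring verification is admissibility: for $x\in\partial\mathcal{D}$ one has $x\neq0$, since $0\in\mathcal{D}$ is an interior point and hence not a boundary point, so
\[
\langle x,H(x,s)\rangle=(1-s)\,\|x\|^2+s\,\langle x,f(x)\rangle>0
\]
for every $s\in[0,1]$: the first term is positive when $s<1$ and the second is positive when $s=1$. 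In particular $H(x,s)\neq0$ on $\partial\mathcal{D}\times[0,1]$.

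Then I would apply the homotopy invariance (iii) to conclude $\deg(f,\mathcal{D},0)=\deg(\mathrm{id},\mathcal{D},0)$, which equals $1$ by the normalization (i) because $0\in\mathcal{D}$. Finally, by the solvability property (ii), $\deg(f,\mathcal{D},0)=1\neq0$ produces a point $x\in\mathcal{D}$ with $f(x)=0$, which is the assertion.

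The deduction itself is essentially formal once the degree-theory toolkit is granted; the only content-bearing computation is the inequality $\langle x,H(x,s)\rangle>0$ on the boundary, and this is precisely what the sign hypothesis is designed to supply. Consequently the real obstacle, if one insists on a fully self-contained treatment, lies not in this argument but in the construction of the Brouwer degree (via smooth approximation and Sard's theorem) and the proofs of properties (i)--(iii); for these I would simply invoke \cite{cho2006topological} rather than reproduce them.
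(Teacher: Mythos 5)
Your proof is correct: it is the standard degree-theoretic argument (straight-line homotopy to the identity, admissibility from the sign condition since $x\neq 0$ on $\partial\mathcal{D}$, normalization, and the solvability property). The paper itself gives no proof of this lemma — it is quoted directly from \cite[Theorems~1.2.6 and 1.2.9]{cho2006topological} — so your argument simply supplies the standard derivation of the cited result, and it does so without error.
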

Suppose we can find a continuous mapping $F: \mathcal{P}_t \rightarrow (\Omega^d)^N$, where, $F(P):= (\bz_1(P), \cdots, \bz_N(P))$, such that for all $P \in \partial \mathcal{B}$, there holds
$$
\sum_{i=1}^N P (\bz_i(P))>0\,.
$$
When such $F$ exists, if we set the composition mapping 
\begin{equation}\label{Definition f=LcircF}
f:= L \circ F : \mathcal{P}_t \rightarrow \mathcal{P}_t\,, 
\end{equation}
by applying Lemma~\ref{lem:brower_degree} by letting $\mathcal{D}=\mathcal{B}$ and using the identity
$$
\left\langle P, f(P) \right\rangle = \sum_{i=1}^N P(\bz_i(P))
$$
for each $P \in \mathcal{P}_t$,
we would obtain Theorem~\ref{thm:existence_complex_design}. Clearly, the problem boils down to constructing such $F$.

We now explicitly construct function $F$ for each $N \geq C_d t^d$. 
To achieve that,
we extensively use the following notion of an area-regular partition. Let $\mathcal{R}= \{R_1, \cdots, R_N \}$ be a finite collection of closed set $R_i \subset \Omega^d$ such that $\cup_{i=1}^N R_i = \Omega^d$ and $\mu_d(R_i \cap R_j)=0$ for all $ 1\leq i < j \leq N$. The partition norm for $\mathcal{R}$ is defined by 
$$
\|\mathcal {R} \|: = \max_{R \in \mathcal{R}} \hbox{diam} (R), 
$$
where $\hbox{diam}(R):=\max_{\cu,\cv\in R} \distc(\cu,\cv)$ stands for the diameter of the set $R$, which is the geodesic distance of the two most-distant points in $R$.

\begin{lemma}\label{lem:partition}
Let $\Omega^d$ be the $d$-dimensional complex sphere for $d\geq1$. For each $N \in \mathbb{N}$, there exists an area-regular partition $\mathcal{R}=\{R_1,R_2, \cdots, R_N\}$ of $\csd$ with $\|\mathcal{R}\|< B_d N^{-\frac{1}{2d-1}}$, where $B_d$ is a constant which depends only on $d$.
\end{lemma}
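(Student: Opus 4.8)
The plan is to transport a known diameter‑bounded equal‑measure partition of the real sphere $\sph{2d-1}$ to $\csd$ through the distance‑preserving bijection $\phi^{\texttt{R}\to \texttt{C}}$ of \eqref{eq:map.sph.csd}, so that the entire content reduces to the classical real‑sphere partition result plus bookkeeping.

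First I would record the structural properties of $\phi^{\texttt{R}\to \texttt{C}}$. For $d\ge 2$ the map $\phi^{\texttt{R}\to \texttt{C}}$ is the restriction to $\sph{2d-1}$ of the $\R$‑linear isomorphism $\R^{2d}\to\C^{d}$ sending $(x_1,\dots,x_{2d})$ to $(x_1+x_2\imu,\dots,x_{2d-1}+x_{2d}\imu)$; under the natural real‑coordinate identification $\C^d\cong\R^{2d}$ this is simply the identity, and it carries the Euclidean inner product to $\re\innerh{\cdot}{\cdot}$. Hence $\phi^{\texttt{R}\to \texttt{C}}$ is a bijective isometry of Riemannian manifolds: it is a homeomorphism (so it sends closed sets to closed sets), it preserves geodesic distance by \eqref{eq:distc.equiv.distr} (so it preserves diameters of subsets), and since $\sigma(\sph{2d-1})=2\pi^d/\Gamma(d)=\omega_d$ it pushes the normalised surface measure of $\sph{2d-1}$ forward to $\csdm[d]$ (so it sends measure‑$0$ sets to measure‑$0$ sets and equal‑measure sets to equal‑measure sets). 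The case $d=1$ is degenerate: $\sph{1}=\csd[1]$ and $2d-1=1$, so it is either trivial or an instance of the argument below with exponent $1$.

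Next I would invoke the existence of diameter‑bounded equal‑measure partitions of the real unit sphere: for every integer $n\ge 1$ and every $N\in\N$ there is a partition $\{Q_1,\dots,Q_N\}$ of $\sph{n}$ into closed sets of equal measure $\sigma(\sph{n})/N$ with $\sigma(Q_i\cap Q_j)=0$ for $i\neq j$ and $\mathrm{diam}(Q_i)\le c_n N^{-1/n}$, where $c_n$ depends only on $n$ (see \cite{Leopardi2006} and the partitions used in \cite{BoRaVi2013,BoRaVi2014}; in the generality of Ahlfors‑regular metric measure spaces this is due to Gigante and Leopardi). Applying this with $n=2d-1$ and setting $R_i:=\phi^{\texttt{R}\to \texttt{C}}(Q_i)$, the sets $R_i$ are closed, cover $\csd$, satisfy $\csdm[d](R_i\cap R_j)=0$ and $\csdm[d](R_i)=1/N$, and
\[
  \mathrm{diam}(R_i)=\mathrm{diam}(Q_i)\le c_{2d-1}\,N^{-1/(2d-1)}
\]
by distance‑preservation. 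Thus $\mathcal{R}=\{R_1,\dots,R_N\}$ is an area‑regular partition of $\csd$ with $\|\mathcal{R}\|\le B_d N^{-1/(2d-1)}$ on taking $B_d:=c_{2d-1}$ (any $B_d$ strictly larger than $c_{2d-1}$ gives the strict inequality in the statement).

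The only genuine input is the cited real‑sphere partition theorem, so I expect that to be the main — indeed essentially the only — obstacle: locating a clean reference for an equal‑measure partition of $\sph{2d-1}$ achieving the sharp exponent $1/(2d-1)$. If a self‑contained argument is preferred, one can build such a partition directly on $\sph{n}$ by the standard recursive zonal (``cap‑and‑collar'') construction — slicing the sphere into finitely many zones of equal measure and each zone into equal angular sectors, then recursing on the polar caps — and verify the $O(N^{-1/n})$ diameter bound by an induction on $n$; this is routine but slightly tedious, and I would not carry out the estimate in detail here.
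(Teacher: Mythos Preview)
Your proposal is correct and follows essentially the same route as the paper: invoke a known diameter-bounded equal-area partition of the real sphere $\sph{2d-1}$ and push it forward through the isometry $\phi^{\texttt{R}\to \texttt{C}}$ of \eqref{eq:map.sph.csd}, using the distance preservation \eqref{eq:distc.equiv.distr} (Proposition~\ref{Proposition: csd rsd relationship}) to carry over the diameter bound. The only cosmetic difference is the choice of reference for the real-sphere partition (the paper cites Bourgain--Lindenstrauss and Kuijlaars--Saff rather than Leopardi/Gigante--Leopardi), and your version spells out the measure-preservation and closedness bookkeeping that the paper leaves implicit.
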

\begin{proof}
	By \cite{bourgain1988distribution,kuijlaars1998asymptotics}, there exists an area-regular $N$-partition $\mathcal{R'}$ on $\sph{2d-1}$ with $\|\mathcal{R'}\|< B'_d N^{-\frac{1}{2d-1}}$, where $B'_d$ is a constant depending only on $d$. We use \eqref{eq:map.sph.csd} to map this partition from $\sph{2d-1}$ to $\csd$. By Proposition~\ref{Proposition: csd rsd relationship}, the geometric properties of the partition are preserved, thus, the corresponding $N$-partition $\mathcal{R}$ on $\csd$ satisfies $\|\mathcal{R}\|< B'_d N^{-\frac{1}{2d-1}}$.
\end{proof}

We need to use the following Marcinkiewicz-Zygmund type inequality, which is a special case of \cite[Theorem~5.1]{FiMh2011}.
\begin{lemma}\label{lem:MZineq}
Fix $m\in \mathbb{N}$. For $\eta\in (0,1)$, there exists a constant $r_d$ depending only on $d$ such that for each area-regular partition $\mathcal{R}=\{R_1,\dots,R_N\}$ of $\Omega^d$ with $\|\mathcal{R}\|<r_d/m$, each set of $N$ points $x_i$, with $x_i \in R_i$ for $i = 1, \dots, N$, and each complex spherical polynomial $P$ of total degree $m$ on $\Omega^d$, we have the Marcinkiewicz-Zygmund type inequality 
\begin{equation*}
(1-\eta) \int_{\csd}|P(\bz)| \intc{z} \leq \frac{1}{N} \sum_{i=1}^N | P(\bz_i)| \leq (1+\eta)\int_{\csd}|P(\bz)| \intc{z}.
\end{equation*}
\end{lemma}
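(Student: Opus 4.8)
The plan is to recognize Lemma~\ref{lem:MZineq} as the instance of the general Marcinkiewicz--Zygmund inequality \cite[Theorem~5.1]{FiMh2011} attached to the compact Riemannian manifold $\csd$. Two routes present themselves. Route (a) is to apply \cite[Theorem~5.1]{FiMh2011} directly on $\csd$, which only requires naming the relevant data — the Haar measure $\mu$, the Laplace--Beltrami operator, and the spaces $\mathcal{H}_{k,l}(\csd)$ — and checking that $\csd$, being a compact homogeneous space, meets the heat-kernel and regularity hypotheses of that framework, with its ``diffusion polynomials of degree $m$'' coinciding with $\mathbb{H}_m(\csd)$. Route (b) is to transport the $\sph{2d-1}$ case of the same theorem across the map $\phi^{\texttt{R}\to\texttt{C}}$ of \eqref{eq:map.sph.csd}. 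I would take route (b), since it reuses Proposition~\ref{Proposition: csd rsd relationship} and the identity \eqref{eq:distc.equiv.distr}, exactly as the proof of Lemma~\ref{lem:partition} does.

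For route (b) I would argue in three steps. First, $\phi^{\texttt{R}\to\texttt{C}}$ is the restriction to the unit sphere of the linear identification $\R^{2d}\cong\C^d$, under which $\re\innerh{\cu}{\cv}$ is the Euclidean inner product; hence it is a Riemannian isometry of $\sph{2d-1}$ onto $\csd$. By \eqref{eq:distc.equiv.distr} it preserves geodesic distance, hence diameters, and as an isometry it carries the normalized surface measure on $\sph{2d-1}$ to the normalized Riemannian volume $\mu$ of $\csd$. Consequently an area-regular partition $\mathcal{R}=\{R_1,\dots,R_N\}$ of $\csd$ pulls back to an area-regular partition $\mathcal{R}'=\{R_1',\dots,R_N'\}$ of $\sph{2d-1}$ with $\|\mathcal{R}'\|=\|\mathcal{R}\|$, and points $\bz_i\in R_i$ to points $\by_i:=(\phi^{\texttt{R}\to\texttt{C}})^{-1}(\bz_i)\in R_i'$. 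Second, I would check that a complex spherical polynomial $P$ of total degree at most $m$ on $\csd$ pulls back to $P\circ\phi^{\texttt{R}\to\texttt{C}}$, a complex-valued real spherical polynomial of degree at most $m$ on $\sph{2d-1}$: since $\Delta=\sum_{j=1}^d\partial^2/(\partial u_j\,\partial\conj{u_j})=\tfrac14\Delta_{\R^{2d}}$, complex-harmonicity coincides with Euclidean harmonicity, and a monomial of bidegree $(k,l)$ in $(u,\conj{u})$ is a product of $k+l$ linear forms in $x_1,\dots,x_{2d}$, so $\mathbb{H}_m(\csd)$ is carried into the complexified space of real spherical polynomials of degree at most $m$ on $\sph{2d-1}$. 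Third, I would apply the $\sph{2d-1}$ case of \cite[Theorem~5.1]{FiMh2011} to $Q:=P\circ\phi^{\texttt{R}\to\texttt{C}}$, to the partition $\mathcal{R}'$, and to the points $\by_i$: there is $r_d>0$, depending only on $d$ and $\eta$, such that $\|\mathcal{R}'\|<r_d/m$ forces
\[
(1-\eta)\int_{\sph{2d-1}}|Q|\,\mathrm{d}\sigma \;\leq\; \frac1N\sum_{i=1}^N|Q(\by_i)| \;\leq\; (1+\eta)\int_{\sph{2d-1}}|Q|\,\mathrm{d}\sigma,
\]
with $\sigma$ the normalized surface measure; rewriting both integrals over $\csd$ via the measure-preserving property and using $Q(\by_i)=P(\bz_i)$ then gives the stated inequality with the same $r_d$.

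I do not expect a genuine obstacle. The one point that needs care is the second step — correctly matching the two polynomial classes, and in particular confirming that ``total degree $m$'' on the complex side translates into degree $m$, not $2m$, in the real coordinates, which holds precisely because each $u_j$ and $\conj{u_j}$ is a linear form in $x_1,\dots,x_{2d}$. Should one instead follow route (a), the comparable (mild) task is to verify that $(\csd,\mu,\Delta)$ falls within the hypotheses of \cite[Theorem~5.1]{FiMh2011} and that its diffusion polynomials of degree $m$ are exactly $\mathbb{H}_m(\csd)$ after the correct normalization of the spectral parameter — routine for a compact Riemannian homogeneous space. I would also note that the constant $r_d$ in the statement should, strictly, be allowed to depend on $\eta$ as well as on $d$, since it is inherited from the real-sphere inequality.
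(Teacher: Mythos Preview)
Your proposal is correct. The paper itself does not give a proof of this lemma: it simply states that the result ``is a special case of \cite[Theorem~5.1]{FiMh2011}'' and moves on, which is essentially your route~(a) with no further elaboration. Your route~(b) argument via the isometry $\phi^{\texttt{R}\to\texttt{C}}$ is a more explicit and self-contained justification; the key observation that $u_j,\conj{u_j}$ are linear in the real coordinates, so bidegree $(k,l)$ maps to real degree $k+l$, is exactly what is needed to make the degree bookkeeping go through, and your remark that $r_d$ should in principle depend on $\eta$ is well taken.
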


With this Lemma, we have the following result.
\begin{corollary}\label{cor:mzineq_grad}
For each area-regular partition $\mathcal{R}=\{R_1, \cdots, R_N\}$ of $\Omega^d$ with $\|\mathcal{R}\| < \frac {r_d} {m+1}$, each collection of $N$ points $x_i$ where $x_i \in R_i$, $i=1, \dots, N$, and each complex spherical polynomial $P\in \mathbb{H}_m(\csd)$, 
\begin{equation*}
\frac 1 {3\sqrt d} \int_{\csd} |\nabla P(\bz)| \intc{z} \leq  \frac{1}{N} \sum_{i=1}^N |\nabla P(\bz_i)| \leq  3\sqrt d \int_{\csd} |\nabla P(\bz)| \intc{z}. 
\end{equation*}
\end{corollary}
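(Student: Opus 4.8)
The strategy is to reduce the statement about the gradient $\nabla P$ to a scalar Marcinkiewicz–Zygmund inequality for the component polynomials, then sum over components using the equivalence of the $\ell_1$ and $\ell_2$ norms on $\mathbb{C}^d$. First I would observe that for $P \in \mathbb{H}_m(\csd)$, each component $f_k$ of $\nabla P$ (as defined above) is a polynomial of total degree at most $m+1$ on $\csd$, since $f_k = \partial_{z_k} P - z_k \sum_{j=1}^d z_j \partial_{z_j} P$ only raises the degree by one. Hence Lemma~\ref{lem:MZineq}, applied at degree $m+1$ and with a fixed choice of $\eta$ (say $\eta = 1/2$), is available for each $f_k$, provided $\|\mathcal{R}\| < r_d/(m+1)$, which is exactly the hypothesis.

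The next step is to pass from the scalar inequalities
\[
\tfrac{1}{2}\int_{\csd}|f_k(\bz)|\intc{z} \le \frac{1}{N}\sum_{i=1}^N |f_k(\bz_i)| \le \tfrac{3}{2}\int_{\csd}|f_k(\bz)|\intc{z}, \quad k=1,\dots,d,
\]
to the vector-valued statement. Summing the upper bounds over $k$ and using $|\nabla P| = \bigl(\sum_k |f_k|^2\bigr)^{1/2} \le \sum_k |f_k|$ pointwise gives
\[
\frac{1}{N}\sum_{i=1}^N |\nabla P(\bz_i)| \le \frac{1}{N}\sum_{i=1}^N \sum_{k=1}^d |f_k(\bz_i)| \le \tfrac{3}{2}\sum_{k=1}^d \int_{\csd}|f_k(\bz)|\intc{z} \le \tfrac{3}{2}\sqrt{d}\int_{\csd}|\nabla P(\bz)|\intc{z},
\]
where the last inequality is Cauchy–Schwarz applied to $\sum_k |f_k| \le \sqrt{d}\,|\nabla P|$. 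For the lower bound, run the same chain in reverse: $|\nabla P| \le \sum_k |f_k|$ gives $\int |\nabla P| \le \sum_k \int |f_k| \le 2\sum_k \frac{1}{N}\sum_i |f_k(\bz_i)|$, and then $\sum_k |f_k(\bz_i)| \le \sqrt{d}\,|\nabla P(\bz_i)|$ yields $\int|\nabla P| \le 2\sqrt{d}\,\frac{1}{N}\sum_i |\nabla P(\bz_i)|$, i.e. $\frac{1}{N}\sum_i |\nabla P(\bz_i)| \ge \frac{1}{2\sqrt{d}}\int|\nabla P|$. Since $\tfrac{3}{2}\sqrt{d} \le 3\sqrt{d}$ and $\tfrac{1}{2\sqrt{d}} \ge \tfrac{1}{3\sqrt{d}}$, both bounds are weaker than what is claimed, so the stated constants $3\sqrt{d}$ and $1/(3\sqrt{d})$ follow comfortably.

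The only genuinely nontrivial point — and the one I would be careful about — is verifying that each $f_k$ is indeed a (complex spherical) polynomial of total degree $\le m+1$ eligible for Lemma~\ref{lem:MZineq}, since that lemma is stated for complex spherical polynomials of a given total degree, while $f_k$ as written is a tangential derivative defined on the sphere rather than an ambient harmonic polynomial restricted to $\csd$. One can handle this by noting that $\partial_{z_k}P$ and $z_j\partial_{z_j}P$ restricted to $\csd$ lie in $\bigoplus_{k'+l'\le m+1}\mathcal{H}_{k',l'}(\csd)$ (the space of restrictions of polynomials of bidegree summing to at most $m+1$), which is precisely $\mathbb{H}_{m+1}(\csd)$; this is the mild technical verification that makes the argument go through, and it is the step I would expect to need the most care, though it is not deep. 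Everything else is the routine $\ell_1$–$\ell_2$ norm comparison on $\mathbb{C}^d$ sketched above.
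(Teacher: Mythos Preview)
Your proposal is correct and follows essentially the same route as the paper's proof: both observe that each component $f_k$ of $\nabla P$ is a polynomial of degree at most $m+1$, apply Lemma~\ref{lem:MZineq} componentwise with $\eta=1/2$, and then pass between $|\nabla P|$ and $\sum_k|f_k|$ via the standard $\ell_1$--$\ell_2$ comparison $\frac{1}{\sqrt d}\sum_k|f_k|\le |\nabla P|\le \sum_k|f_k|$ to obtain the constants $\tfrac{3}{2}\sqrt d$ and $\tfrac{1}{2\sqrt d}$, which are then relaxed to $3\sqrt d$ and $\tfrac{1}{3\sqrt d}$. Your caution about verifying $f_k\in\mathbb{H}_{m+1}(\csd)$ matches the only nontrivial step the paper also flags.
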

\begin{proof}
For a polynomial $P$, the modulus of its gradient at $\bz=(z_1,\dots,z_d)\in \Omega^d$ is
\begin{equation*}
    |\nabla P(\bz)| = \sqrt{\sum_{k=1}^{d}|P_k(\bz)|^2},
\end{equation*}
where for $k=1,\dots,d$,
\begin{equation*}
    P_k(\bz) :=\frac{\partial P(\bz)}{\partial z_k} -z_k\sum_{j=1}^d z_j \frac{\partial P(\bz)}{\partial z_j},
\end{equation*}
which is a polynomial of degree up to $m+1$. In the following, we use the inequality
\begin{equation*}
    \frac{1}{\sqrt{d}}\sum_{k=1}^d|P_k(\bz)|
    \leq |\nabla P(\bz)|
    \leq \sum_{k=1}^d|P_k(\bz)|,\quad x\in \csd,
\end{equation*}
and the Marcinkiewicz-Zygmund type inequality in Lemma~\ref{lem:MZineq}.
Then,
\begin{align*}
    \frac{1}{N}\sum_{i=1}^N|\nabla P(\bz_i)| 
    &\leq \sum_{k=1}^{d}\frac{1}{N}\sum_{i=1}^N|P_k(\bz_i)|\leq \sum_{k=1}^{d}(1+\eta)\int_{\Omega^d}|P_k(\bz)|\intc{z}\\
    &\leq 3\sqrt{d}\int_{\csd}|
    \nabla P(\bz)|\intc{z},
\end{align*}
where we use $\eta=1/2$ in the final inequality.
On the other hand,
\begin{align*}
    \frac{1}{N}\sum_{i=1}^N|\nabla P(\bz_i)| 
    &\geq\frac{1}{\sqrt{d}} \sum_{k=1}^{d}\frac{1}{N}\sum_{i=1}^N|P_k(\bz_i)|\\
    &\geq \frac{1-\eta}{\sqrt{d}}\sum_{k=1}^{d}\int_{\csd}|P_k(\bz)|\intc{z}\geq \frac{1}{3\sqrt{d}}\int_{\csd}|
    \nabla P(\bz)|\intc{z},
\end{align*}
thus completing the proof.
\end{proof}

We are ready to construct $F$ explicitly. For $d,t \in \mathbb{N}$, take 
\begin{equation}
C_d > \left(\frac{54 d B_d}{r_d}\right)^{2d-1},
\end{equation} 
where $B_d$ is the constant in Lemma~\ref{lem:partition} and $r_d$ is defined in Lemma \ref{lem:MZineq},
and fix 
\begin{equation}
N \geq C_d t^{2d-1}. 
\end{equation}
Then, we can construct the mapping $F : \mathcal{P}_t \rightarrow (\Omega^d)^N$ so that $f$ defined in \eqref{Definition f=LcircF} satisfies the assumption of Lemma~\ref{lem:brower_degree}. By Lemma~\ref{lem:partition}, we can take an area-regular partition $\mathcal{R}=\{R_1, \cdots, R_N\}$ with 
$$
\|\mathcal{R}\| \leq B_d N^{-\frac{1}{2d-1}} < \frac {r_d}{54dt}\,.
$$
We then choose an arbitrary $x_i \in R_i$ for each $i = 1, \cdots, N$. Put $\epsilon = \frac{1}{6 \sqrt d}$, and consider the function
\begin{equation}\label{eq:hepsilon}
h_\epsilon(u) := \left \{ 
\begin{aligned}
& u \quad \mbox{ if $u > \epsilon$}, \\
& \epsilon \quad \mbox{ otherwise}.
\end{aligned}
\right.
\end{equation}
Taking a mapping $U : \mathcal{P}_t \times \Omega^d \rightarrow \mathbb{C}^d$ such that
\begin{equation}\label{eq:UPy}
	U(P,\cu) := \frac{\nabla P(\cu)}{h_\epsilon(|\nabla P(\cu)|)}.
\end{equation}
For each $i=1,\cdots, N$, let $\cu_i : \mathcal{P}_t \times [0, \infty) \rightarrow \Omega^d$ be the map satisfying the differential equation 
\begin{equation}\label{eq:dyi}
    \frac{\dd}{\dd s}\cu_i(P,s) = U(P,\cu_i(P,s))
\end{equation}
with the initial condition
$$
	\cu_i(P,0) = \bz_i
$$
for each $P \in \mathcal{P}_t$. Since $U(P,\cu)\leq 1$ for $(P,\cu)\in \mathcal{P}_t \times \csd$, the equation in \eqref{eq:dyi} has a solution. Finally, put
$$
F(P) = (\bz_1(P), \cdots, \bz_N(P)):= \left(\cu_1\left(P,\frac{r_d}{3t}\right), \cdots, \cu_N\left(P,\frac{r_d}{3t}\right)\right)\,.
$$
That is, we construct $F$ by moving the arbitrarily chosen $x_i$ to the ``right place'' via the dynamics \eqref{eq:dyi}.
\begin{lemma}
Let $F : \mathcal{P}_t \rightarrow (\Omega^d)^N$ be the mapping defined by above equation and $\mathcal{B}$ the open set of polynomials given by \eqref{eq:B}. Then, for each $P \in \partial \mathcal{B}$, 
$$
\frac 1 N \sum_{i=1}^N P(\bz_i(P))>0.
$$
\end{lemma}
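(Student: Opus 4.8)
The plan is to run the gradient-ascent argument behind the Brouwer-degree proof of Theorem~\ref{thm:existence_complex_design} (in the spirit of \cite{BoRaVi2013}): the flow $s\mapsto\cu_i(P,s)$ pushes each anchor point $x_i$ uphill along $P$, so the average of $P$ over the endpoints $\bz_i(P)$ equals the average over the anchors — which already integrates $P$ with tiny error, because the $x_i$ sit one per cell of the fine partition $\mathcal{R}$ and $\int_{\csd}P\intc{z}=0$ for $P\in\mathcal{P}_t$ — plus a strictly positive contribution produced by the flow. This is the last ingredient needed for $f=L\circ F$ to meet the hypothesis of Lemma~\ref{lem:brower_degree}.

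First I would establish the monotonicity identity. Fix $P\in\partial\mathcal{B}$, so that $\int_{\csd}|\nabla P(\bz)|\intc{z}=1$. Differentiating along a trajectory and using \eqref{eq:dyi}, \eqref{eq:UPy} and the definition of $\nabla$ on $\csd$,
\[
\frac{\mathrm{d}}{\mathrm{d}s}\,P(\cu_i(P,s))=\frac{|\nabla P(\cu_i(P,s))|^{2}}{h_\epsilon(|\nabla P(\cu_i(P,s))|)}\ \ge\ |\nabla P(\cu_i(P,s))|-\epsilon\ \ge\ 0,
\]
the middle step being the elementary bound $u^{2}/h_\epsilon(u)\ge u-\epsilon$ valid for all $u\ge 0$. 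Integrating over $s\in[0,\tfrac{r_d}{3t}]$ and averaging over $i=1,\dots,N$ gives
\[
\frac1N\sum_{i=1}^{N}P(\bz_i(P))\ \ge\ \frac1N\sum_{i=1}^{N}P(x_i)-\frac{\epsilon\,r_d}{3t}+\int_{0}^{r_d/(3t)}\!\Bigl(\frac1N\sum_{i=1}^{N}|\nabla P(\cu_i(P,s))|\Bigr)\mathrm{d}s .
\]

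Next I would estimate the three terms. For the anchor term: since $x_i\in R_i$ with $\mu_d(R_i)=1/N$ and $\|\mathcal{R}\|<\tfrac{r_d}{54dt}$, each cell $R_i$ is small relative to $1/t$, so $\nabla P$ is essentially constant on it, and together with $\int_{\csd}P\intc{z}=0$ this yields the midpoint-rule estimate $\bigl|\tfrac1N\sum_iP(x_i)\bigr|\le c_0\|\mathcal{R}\|\int_{\csd}|\nabla P|\intc{z}<c_0\tfrac{r_d}{54dt}$ with $c_0$ absolute. For the gradient term: for each fixed $s\in[0,\tfrac{r_d}{3t}]$ the points $\{\cu_i(P,s)\}$ still lie one per cell of an area-regular partition whose norm is at most $\|\mathcal{R}\|+2s<\tfrac{r_d}{t+1}$ — here it matters that $|U(P,\cdot)|\le 1$ and that the flow time $\tfrac{r_d}{3t}$ is only the fixed multiple $18d$ of the partition-norm bound — so Corollary~\ref{cor:mzineq_grad} gives $\tfrac1N\sum_i|\nabla P(\cu_i(P,s))|\ge\tfrac1{3\sqrt d}\int_{\csd}|\nabla P|\intc{z}=\tfrac1{3\sqrt d}$. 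Substituting these bounds and using $\epsilon=\tfrac1{6\sqrt d}$,
\[
\frac1N\sum_{i=1}^{N}P(\bz_i(P))\ \ge\ -\,c_0\frac{r_d}{54dt}-\frac{r_d}{18\sqrt d\,t}+\frac{r_d}{9\sqrt d\,t}=\frac{r_d}{t}\Bigl(\frac1{18\sqrt d}-\frac{c_0}{54d}\Bigr)>0,
\]
the positivity being precisely what the calibrated constants ($\epsilon=\tfrac1{6\sqrt d}$, flow time $\tfrac{r_d}{3t}$, and $\|\mathcal{R}\|<\tfrac{r_d}{54dt}$, forced by $C_d>(54dB_d/r_d)^{2d-1}$) are arranged to deliver, given that $c_0<3\sqrt d$.

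The hard part is the gradient term. The images of the cells $R_i$ under the time-$s$ flow of $U(P,\cdot)$ have diameter at most $\|\mathcal{R}\|+2s$ by the triangle inequality, but they are not area-regular, since $\operatorname{div}U(P,\cdot)$ is not bounded uniformly over $P\in\partial\mathcal{B}$; so one must instead exhibit, uniformly in $P$ and $s$, a genuine area-regular partition of norm below $\tfrac{r_d}{t+1}$ each of whose cells contains exactly one $\cu_i(P,s)$, and this must be carried out sharply enough that the resulting $c_0$ obeys $c_0<3\sqrt d$. The remaining, purely arithmetic, point is that every term in the final display is of order $1/t$, so the numerical constants in $\epsilon$, $\|\mathcal{R}\|$, the flow time $r_d/(3t)$, and $C_d$ are exactly what decide the sign; the choices made in the construction are what make the chain of inequalities close.
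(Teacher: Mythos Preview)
Your overall architecture is exactly the paper's: Newton--Leibniz along the flow, a quadrature error bound for the anchor term, and the Marcinkiewicz--Zygmund lower bound on $\tfrac1N\sum_i|\nabla P(\cu_i(s))|$ for the drift term. The calculus and the numerics are set up correctly.

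The ``hard part'' you single out is not hard, and the device you are missing is the whole point. You are looking for an area-regular partition each of whose cells contains $\cu_i(P,s)$; you consider flowing the cells and correctly note this fails because the flow need not be measure-preserving. The paper's move is much simpler: for each fixed $s$ define the \emph{extended partition} $\mathcal{R}''=\{R_i''\}$ with $R_i'':=R_i\cup\{\cu_i(P,s)\}$. Adjoining one point to each cell leaves the measures unchanged, so $\mathcal{R}''$ is still area-regular; and since $|U|\le 1$ gives $\distc(\cu_i(s),x_i)\le s$ with $x_i\in R_i$, the diameter of $R_i''$ is at most $\|\mathcal{R}\|+s\le \tfrac{r_d}{54dt}+\tfrac{r_d}{3t}<\tfrac{r_d}{t+1}$. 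Now Corollary~\ref{cor:mzineq_grad} applies directly to the points $\cu_i(s)\in R_i''$ and yields $\tfrac1N\sum_i|\nabla P(\cu_i(s))|\ge\tfrac1{3\sqrt d}$. The same trick handles the anchor term: to bound $\bigl|\tfrac1N\sum_iP(x_i)\bigr|$ you write it as $\bigl|\sum_i\int_{R_i}(P(x_i)-P(\bz))\intc{z}\bigr|\le \tfrac{\|\mathcal{R}\|}{N}\sum_i\max_{\distc(\bz,x_i)\le\|\mathcal{R}\|}|\nabla P(\bz)|$, then set $R_i':=R_i\cup\{\bz_i^*\}$ with $\bz_i^*$ a nearby maximizer and apply Corollary~\ref{cor:mzineq_grad} again. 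This produces exactly $c_0=3\sqrt d$, not $c_0<3\sqrt d$.

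One more point on the arithmetic: with $c_0=3\sqrt d$ your final display collapses to $0$, not to a positive number. The strict positivity in the paper does not come from $c_0<3\sqrt d$; it comes from the strict inequality $\|\mathcal{R}\|<\tfrac{r_d}{54dt}$ (enforced by $N\ge C_dt^{2d-1}$ with $C_d>(54dB_d/r_d)^{2d-1}$), which makes the anchor bound \emph{strictly} less than $\tfrac{r_d}{18\sqrt d\,t}$ and hence the total strictly greater than $0$.
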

For each $P \in \partial \mathcal{B}$, by definition we have
$$
\int_{\csd} |\nabla P(\bz)| \intc{z} =1.
$$
To simplify the notation, we write $\cu_i(s)$ in place of $\cu_i(P,s)$. By the Newton-Leibniz formula, we have
\begin{align}
       \frac 1 N \sum_{i=1}^N P(\bz_i(P)) &= \frac 1 N \sum_{i=1}^N P(\cu_i(r_d/3t))\notag \\
       &= \frac 1 N \sum_{i=1}^N P(\bz_i(P)) +\int_0^{r_d/3t} \frac \dd {\dd s} \left[\frac 1 N \sum_{i=1}^N P(\cu_i(s))\right] \dd s.\label{eq:NLformula}
\end{align}
We now need to estimate the quantities
$$
\left|\frac 1 N \sum_{i=1}^N P(\bz_i)\right| \quad \mbox{ and } \quad  \frac \dd {\dd s} \left[\frac 1 N \sum_{i=1}^N P(\cu_i(s))\right].
$$
For each $s \in [0, r_d /3t]$, we have 
\begin{align*}
    \left|\frac 1 N \sum_{i=1}^N P(\bz_i)\right| 
    &= \left|\sum_{i=1}^N \int_{R_i} (P(\bz_i)- P(\bz)) \intc{z}\right| \leq  
    \sum_{i=1}^N \int_{R_i} |P(\bz_i)- P(\bz)| \intc{z} \\
    & \leq \frac{\|\mathcal{R}\|}{N} \sum_{i=1}^N \max_{\bz \in \Omega^d: \distc(\bz,\bz_i) \leq \|\mathcal{R}\|} |\nabla P(\bz)|,
\end{align*}
where $\mathrm{dist}(\bz,\bz_i)$ denotes the geodesic distance between points $\bz$ and $\bz_i$ in $\csd$. Hence, for $\bz^*_i \in \csd$ such that $\mathrm{dist}(\bz^*_i,\bz_i) \leq \|\mathcal{R}\|$ and 
$$
\left|\nabla P(\bz^*_i)\right| = \max_{\bz \in \csd: \distc(\bz,\bz_i) \leq \|\mathcal{R}\|} |\nabla P(\bz)|,
$$
we obtain
$$
\left|\frac 1 N \sum_{i=1}^N P(\bz_i)\right| \leq \frac{\|\mathcal{R}\|}{N} \sum_{i=1}^N |\nabla P(\bz^*_i)|.
$$
Consider the extended area-regular partition $\mathcal{R}' =\{R_1', \cdots, R_N'\}$ defined by $R_i'= R_i \cup \{\bz^*_i\}$. Then, 
\begin{equation}
\|\mathcal{R}'\| \leq 2\|\mathcal{R}\| \mbox{ and }\|\mathcal{R}'\| < r_d / (27dt)\,. 
\end{equation}
Applying Corollary~\ref{cor:mzineq_grad}, we obtain that
\begin{equation}\label{eq:quadrule_UB}
    \left|\frac 1 N \sum_{i=1}^N P(\bz_i)\right| \leq 3\sqrt{d}\: \|\mathcal{R}\| \int_{\csd} |\nabla P(\bz)| \intc{z} < \frac{r_d}{18 \sqrt d t}.
\end{equation}
Now, using \eqref{eq:hepsilon}, by a direct bound, we have
\begin{align*}
    \frac \dd {\dd s} \left[ \frac 1 N \sum_{i=1}^N P(\cu_i(s))\right] &= \frac 1 N \sum_{i=1}^N \frac{|\nabla P(\cu_i(s))|^2}{h_{\epsilon}(|\nabla P(\cu_i(s))|)} \\
    & \geq \frac 1 N \sum_{i: |\nabla P(\cu_i(s))| \geq \epsilon} |\nabla P(\cu_i(s))|   \geq \frac 1 N \sum_{i=1}^N |\nabla P(\cu_i(s))| - \epsilon\,.
\end{align*}
For each $y_i(s) \in \Omega^d$, by \eqref{eq:UPy} and \eqref{eq:dyi},
\begin{equation}
    \left|\frac{\dd \cu_i(s)}{\dd s}\right| \leq \frac{|\nabla P(\cu_i(s))|}{h_{\epsilon}(|\nabla P(\cu_i(s))|)} \leq 1.
\end{equation}
Then, with the mean value theorem,
\begin{equation}\label{eq:d_xi_yi}
    \mathrm{dist}(\bz_i,\cu_i(s))=\mathrm{dist}(\cu_i(0),\cu_i(s)) \leq s.
\end{equation}
For each $s \in [0 , r_d/3t]$, let $\mathcal{R}_i''= R_i \cup \{y_i(s)\}$ and consider the extended area-regular partition $\mathcal{R}''=\{R''_1, \ldots , R''_N\}$. Using \eqref{eq:d_xi_yi}, the norm of $\mathcal{R}''$ is bounded by 
$$ 
\|\mathcal{R}''\| < \frac{r_d}{54}+\frac{r_d}{3t}.
$$
We can then apply Corollary~\ref{cor:mzineq_grad} with $\mathcal{R}''$ to obtain for each $P \in \partial \mathcal{B}$ and $s \in [0, r_d/ 3t]$,
\begin{align*}
\frac \dd {\dd s} \left[ \frac 1 N \sum_{i=1}^N P(\cu_i(s))\right]  &\geq  \frac 1 N \sum_{i=1}^N |\nabla P(\cu_i(s)) |  - \frac{1}{6\sqrt d}  \\
& \geq \frac{1}{3\sqrt d} \int_{\csd} |\nabla P(\bz)| \intc{z} - \frac{1}{ 6 \sqrt d} = \frac{1}{6 \sqrt d},
\end{align*}
where $\epsilon = \frac{1}{6\sqrt{d}}$ is used. This together with \eqref{eq:NLformula} and \eqref{eq:quadrule_UB} give 
$$
\frac{1}{N} \sum_{i=1}^N P(\bz_i(P)) > \frac 1 {6 \sqrt d} \frac{r_d}{3t} - \frac {r_d}{18 \sqrt d t} = 0,
$$
thus completing the proof. 

\subsection{Proofs for Section~\ref{sec:uniformity}}\label{sec:uniformproof}

\begin{proof}[Proof of Proposition~\ref{Proposition: csd rsd relationship}] 
For the separation distance, by \eqref{eq:distc.equiv.distr},
\begin{equation*}
	\sepc_{\phi^{\texttt{R}\to \texttt{C}}(X_N)} = \min_{\stackrel{1\le i,j\le N}{i\neq j}} \distc(\phi^{\texttt{R}\to \texttt{C}}(\bx_i),\phi^{\texttt{R}\to \texttt{C}}(\bx_j)) = \min_{\stackrel{1\le i,j\le N}{i\neq j}} \distr(\bx_i,\bx_j) = \sepr_{X_N}.
\end{equation*}
For the covering distancee, since for any $\by\in\sph{2d-1}$, $\phi^{\texttt{R}\to \texttt{C}}(\by)\in\csd$, then,
	\begin{align}
		\coverc_{\phi^{\texttt{R}\to \texttt{C}}(X_N)} 
		&= \max_{\bz\in\cd} \min_{1\le j\le N}\distc(\bz,\phi^{\texttt{R}\to \texttt{C}}(\bx_j))\notag\\
		&\ge \max_{\by\in\sph{2d-1}} \min_{1\le j\le N}\distc(\phi^{\texttt{R}\to \texttt{C}}(\by),\phi^{\texttt{R}\to \texttt{C}}(\bx_j))\notag\\
		&\ge\max_{\by\in\sph{2d-1}} \min_{1\le j\le N}\distr(\by,\bx_j)=\coverr_{X_N}.\label{eq:coverc.ge.coverr}
	\end{align}
	Since $\phi^{\texttt{R}\to \texttt{C}}$ is a surjective mapping from $\sph{2d-1}$ to $\csd$, for each $\bz\in\csd$, there exists a point $\by_0\in \sph{2d-1}$ such that $\bz=\phi^{\texttt{R}\to \texttt{C}}(\by_0)$. Then, by \eqref{eq:distc.equiv.distr} again,
	\begin{align}\label{eq:coverc.le.coverr.1}
		\min_{1\le j\le N}\distc(\bz,\phi^{\texttt{R}\to \texttt{C}}(\bx_j)) 
		&= \min_{1\le j\le N}\distc(\phi^{\texttt{R}\to \texttt{C}}(\by_0),\phi^{\texttt{R}\to \texttt{C}}(\bx_j))\notag\\
		&= \min_{1\le j\le N}\distr(\by_0,\bx_j).
	\end{align} 
	Taking the maximum of the left hand side of \eqref{eq:coverc.le.coverr.1} over $\bz\in\csd$ gives
	\begin{align*}
		\coverc_{\phi^{\texttt{R}\to \texttt{C}}(X_N)} 
		&= \max_{\bz\in\cd}\min_{1\le j\le N}\distc(\bz,\phi^{\texttt{R}\to \texttt{C}}(\bx_j))\\ 
		&= \max_{\stackrel{\by_0\in\sph{2d-1}}{\bz=\phi^{\texttt{R}\to \texttt{C}}(\by_0)}} \min_{1\le j\le N}\distr(\by_0,\bx_j) \le \coverr_{X_N}.
	\end{align*}
	This and \eqref{eq:coverc.ge.coverr} give $\coverc_{\phi^{\texttt{R}\to \texttt{C}}(X_N)}=\coverr_{X_N}$. We thus complete the proof.
\end{proof}

\section*{Acknowledgement} 
Hau-Tieng Wu acknowledges the hospitality of National Center for Theoretical Sciences (NCTS), Taipei, Taiwan during summer, 2019. 
The authors are grateful for the helpful discussion with Danylo Radchenko on the proof of the existence of the optimal-order complex spherical designs.

{\rm \bibliographystyle{siamplain}
\bibliography{siam_ref}}
\end{document}